\newtheorem{theorem}{Theorem}[section]
\newtheorem{lemma}[theorem]{Lemma}
\theoremstyle{definition}
\newtheorem{step}[theorem]{Step}
\theoremstyle{remark}
\newtheorem{remark}[theorem]{Remark}
\newcommand{\bone}{\mathbbm{1}}
\newcommand{\E}{\operatorname{\mathbb{E}}}
\renewcommand{\P}{\operatorname{\mathbb{P}}}
\newcommand{\prt}{\partial}
\newcommand{\wh}{\widehat}
\newcommand{\R}{\mathbb{R}}
\newcommand{\calF}{\mathcal{F}}
\newcommand{\calE}{\mathcal{E}}
\newcommand{\bal}{{\boldsymbol\alpha}}
\newcommand{\bbet}{{\boldsymbol\beta}}
\newcommand{\wt}{\widetilde}
\newcommand{\eps}{\varepsilon}
\numberwithin{equation}{section}
\begin{document}

\title[Pinned billiard balls]{From pinned billiard balls to\\ partial differential equations}

\author[Burdzy]{Krzysztof Burdzy}
\email{burdzy@uw.edu}
\address{Department of Mathematics, University of Washington, Seattle,
WA 98195}

\author[Hoskins]{Jeremy G. Hoskins}
\email{ jeremyhoskins@uchicago.edu}
\address{Department of Statistics, University of Chicago,
IL 60637}

\author[Steinerberger]{Stefan Steinerberger}
\email{steinerb@uw.edu}
\address{Department of Mathematics, University of Washington, Seattle,
WA 98195}

\thanks{KB’s research was supported in part by Simons Foundation Grants 506732 and 928958. SS's research was partially supported by the NSF (DMS-2123224) and the
Alfred P. Sloan Foundation.}

\keywords{Kinetic transport, stochastic kinetic energy, billard balls}

\subjclass[2020]{35K45, 74A25} 

\begin{abstract} We discuss the propagation of kinetic energy through billiard balls fixed in place along a one-dimensional segment.
The number of billiard balls is assumed to be large but finite and we assume kinetic energy propagates following the usual collision laws of physics. Assuming an underlying stochastic mean-field for the expectation and the variance of the kinetic energy, we derive a coupled system of nonlinear partial difference equations. Our results are illustrated by numerical simulations.
\end{abstract}

\maketitle

\section{Introduction}

This paper is concerned with the evolution of pseudo-velocities of ``pinned billiard balls'' introduced in \cite{fold}. Pinned billiard balls do no move but they have pseudo-velocities which evolve according to the usual totally elastic collision laws for velocities of moving balls. 
We will take a step towards an ``approximate'' hydrodynamic limit model and the corresponding nonlinear partial difference equations.
In Section \ref{a9.1} we  describe the pinned billiard balls model in detail, we  present a conjecture stating its large scale behavior (modulated white noise hypothesis), we  derive partial difference equations for the parameters of modulated white noise, and we  indicate how partial difference equations lead to nonlinear partial differential equations.
Section \ref{a9.2} is devoted to numerical results supporting the modulated white noise hypothesis.
Section \ref{a9.5} contains the discussion of the basic properties of the PDEs informally derived in Remark \ref{a17.1} and ends with the discussion of some hydrodynamic limit results in the literature.
Section \ref{a17.2}  contains the proof of the main rigorous mathematical result of this paper on partial difference equations.

\section{Evolution of pinned billiard balls model parameters}\label{a9.1}

We will present some computations inspired by a one-dimensional system of pinned billiard balls, a special case of a model introduced in \cite{fold}. 
In a system of pinned billiard balls, the  balls touch some other balls and have pseudo-velocities but they do not move. The balls ``collide,'' i.e., their pseudo-velocities change according to the usual laws of totally elastic collisions.

In our case, the centers of the balls are arranged on a finite segment of the real line. Their centers are one unit apart and their radii are all equal to $1/2$, so there is a finite ordered set of balls, each touching its two neighbors (except for the two endpoints, where the balls have only one neighbor). 
See Fig. \ref{fig1}.

\begin{center}
\begin{figure}[h!]
\begin{tikzpicture}
\draw [thick] (0,0) circle (0.5cm);
\draw [thick] (1,0) circle (0.5cm);
\draw [thick] (2,0) circle (0.5cm);
\draw [thick] (3,0) circle (0.5cm);
\draw [thick] (4,0) circle (0.5cm);
\draw [thick] (5,0) circle (0.5cm);
\end{tikzpicture}
\caption{Billard balls arranged along a one-dimensional line. The balls touch but are fixed for all time.}\label{fig1}
\end{figure}
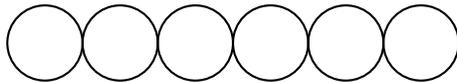
\end{center}

The spacetime for the model is discrete, i.e.,
the velocities $v(x,t)$ are defined for $x=1,2,\dots, n$ and $t=0,1,2, \dots$, where $x$ is the position (i.e., number) of the $x$-th ball.
The evolution, i.e., pseudo-collisions of the balls and transformations of the velocities, is driven by an exogenous random process because the balls do not move and hence they cannot collide in the usual way.

First consider a simplified model in which pairs of adjacent balls are chosen randomly, i.e., in a uniform way, and form an i.i.d. sequence. Every time a pair of adjacent balls is chosen, the velocities become ordered, i.e., if the chosen balls have labels $x$ and $x+1$ and the collision occurs at time $t$ then
\begin{align*}
v(x,t+1) &= \min (v(x,t), v(x+1, t)), \\
v(x+1,t+1) &= \max (v(x,t), v(x+1, t)).
\end{align*}
This agrees with the usual transformation rule for velocities of moving balls of equal masses undergoing totally elastic collisions.
The evolution described above  has been studied under the names of
``random sorting networks'' in \cite{AHRV},
``oriented swap process'' in \cite{AHR} and ``TASEP speed process'' in \cite{AAV}. It has been also called ``colored TASEP.''
For a related model featuring confined (but moving) balls, see \cite{Gaspard_2,Gaspard_2008_1,Gaspard_2008}.\\

While the model described above is very natural and well motivated by physics, it is characterized by a property that is strictly limited to the one-dimensional collision systems---the set of all initial velocities is conserved. The velocities are only rearranged.  In multidimensional pinned ball families energy packets will not be preserved. A typical collision will change two energy packets into two new energy packets of different sizes subject to obeying the conservation laws. \\

The model described below is a compromise between the one-dimensional and higher dimensional models. It is one-dimensional to make the analysis easier but it involves energy exchange to simulate multidimensional evolutions.
Our model incorporates an idea from \cite{BBO2006}, \cite[p. 70]{BBO2009} or \cite[Sect. 2.1.2]{JKO}.
In the following model the evolution of velocities in the one-dimensional family of pinned balls consists of a sequence of two-step transformations. 
In the first step we redistribute energy. In the second step we reorder a pair of velocities.
We start by generating an i.i.d. sequence $(x_t,\, t=0,1,2,\dots)$, with each $x_t$ distributed uniformly in $\{2,3,\dots, n-1\}$. 

\begin{step}\label{a29.1}
Suppose that velocities $v(x,s)$ have been defined for $s=0,\dots,t$ and all $x=1,2,\dots , n$. Consider the following equations for $v_-(x_t-1,t+1), v_-(x_t,t+1)$ and $v_-(x_t+1,t+1)$, representing conservation of energy and momentum,
\begin{align}\label{a28.3}
&v_-(x_t-1,t+1)+ v_-(x_t,t+1) + v_-(x_t+1,t+1)\\
 &\qquad=
v(x_t-1,t)+ v(x_t,t) + v(x_t+1,t),\notag\\
&v_-(x_t-1,t+1)^2+ v_-(x_t,t+1)^2 + v_-(x_t+1,t+1)^2\label{a28.4} \\
&\qquad =
v(x_t-1,t)^2+ v(x_t,t)^2 + v(x_t+1,t)^2.\notag
\end{align}
Given $v(x_t-1,t), v(x_t,t)$ and $v(x_t+1,t)$, the set of solutions $(v_-(x_t-1,t+1), v_-(x_t,t+1),v_-(x_t+1,t+1))$ forms a circle in three-dimensional space, since it is the intersection of a sphere with a two-dimensional plane. We use extra randomness, independent of everything else, to choose a point  $(v_-(x_t-1,t+1), v_-(x_t,t+1),v_-(x_t+1,t+1))$ uniformly on this circle. This completes the first step.
\end{step}

\begin{step}\label{a29.2}
In the second step,
the above energy exchange is followed by reordering of a pair of velocities.
Let $\kappa_t $ be equal $-1$ or $1$, with equal probabilities, independent of everything else. If $\kappa_t = -1$  then
\begin{align*}
v(x_t-1,t+1) &= \min (v_-(x_t-1,t+1), v_-(x_t, t+1)), \\
v(x_t,t+1) &= \max (v_-(x_t-1,t+1), v_-(x_t, t+1)),\\
v(x_t+1, t+1) & = v_-(x_t+1,t+1).
\end{align*}
Otherwise,
\begin{align*}
v(x_t-1, t+1) & = v_-(x_t-1,t+1),\\
v(x_t,t+1) &= \min (v_-(x_t,t+1), v_-(x_t+1, t+1)), \\
v(x_t+1,t+1) &= \max (v_-(x_t,t+1), v_-(x_t+1, t+1)).
\end{align*}
This completes the second step.
\end{step}

For all $x\ne x_t-1, x_t, x_t+1$, we let $v(x,t+1) = v(x,t)$.

\subsection{Modulated white noise}\label{a9.3}

We will assume
that the joint distribution of $\{v(x,t), 1\leq x\leq n, t\geq 0\}$ converges after appropriate rescaling to
\begin{align}\label{a28.1}
v(x,t) = \mu(x,t) + \sigma(x,t) W(x,t),
\end{align}
when $n$ goes to infinity. Here
$W(x,t)$ is spacetime white noise
and $\mu(x,t)$ and $\sigma(x,t)$ are deterministic functions.

We will present numerical evidence for our assumption in Section \ref{a9.2}.
On the theoretical side, our assumption is questionable 
(see Section \ref{rev1.7}) but we will defend it in Section \ref{rev1.8}
as a reasonable compromise between true hydrodynamic model and simplicity.

In our discrete model, ``white noise'' is a collection of i.i.d. standard normal random variables.
The way we will formally work with this assumption is to note that
$$ \mathbb{E}~ v(x,t) = \mu(x,t)$$
and
$$ \mathbb{E}~ v(x,t)^2 = \mu(x,t)^2 + \sigma(x,t)^2$$
from which it becomes possible to deduce both the values of $\mu$ and $\sigma$.

On the theoretical side, there is an immense literature on interacting particle systems and hydrodynamic limits. We list only one article and two books: \cite{Bert,KL,Spohn}. The model closest to ours seems to be that of ``hot rods'' introduced in \cite{DF,BDS}, but the balls are allowed to move in that model.
For recent results on the hot rods model and a review of the related literature, see \cite{FerEt,FerOl}.
The linear version of our partial differential equations  for $\mu$ and $\sigma$ is essentially the same as the equations (2.1) and (2.2)  for the local density  and the local current in \cite{Bert}.
We will discuss some results from \cite{TV03} in Section \ref{rev1.7}.\\

\subsection{Partial difference equations}\label{a12.1}

Proving the hydrodynamic limit theorem for the pinned balls model is a major technical challenge. In this paper, we limit ourselves to a very modest step. We will derive  formulas for the one-step evolution of parameters $\mu$ and $\sigma$ under the assumption that \eqref{a28.1} holds.

We will use the following notation. 
For functions $\wt \mu: (0,1)\times[0,\infty) \to \R$ and $\wt \sigma: (0,1) \times [0,\infty)\to (0,\infty)$ we let
for $t\geq 0$ and $1\leq x \leq n$,
\begin{align*}
&\mu(x,t) = \wt \mu\left( \frac{x-1}{n-1},  \frac{t}{n}\right)\qquad\qquad
\sigma(x,t) = \wt \sigma\left( \frac{x-1}{n-1},  \frac{t}{n}\right),\\
&\wt\mu_{x,n}(x,t)= \frac {\prt} {\prt z} \wt\mu(z,s)\Big|_{\substack{z=(x-1)/(n-1)\\s=t/n}}\qquad
\qquad
\wt\sigma_{x,n}(x,t)= \frac {\prt} {\prt z} \wt\sigma(z,s)\Big|_{\substack{z=(x-1)/(n-1)\\s=t/n}}\\
&\wt\mu_{xx,n}(x,t)= \frac {\prt} {\prt z^2} \wt\mu(z,s)\Big|_{\substack{z=(x-1)/(n-1)\\s=t/n}}\\
&\left(\wt\mu(x,t)^2
+ \wt\sigma(x,t)^2\right)_{xx,n}= \frac {\prt} {\prt z^2} \left(\wt\mu(z,s)^2
+ \wt\sigma(z,s)^2\right)\Big|_{\substack{z=(x-1)/(n-1)\\s=t/n}}\\
&\wt\mu_{t,n}(x,t)=
\frac {\prt} {\prt s} \wt\mu(z,s)\Big|_{\substack{z=(x-1)/(n-1)\\s=t/n}}\ .
\end{align*}

\begin{theorem}\label{a28.2}
Suppose functions $\wt \mu: (0,1)\times[0,\infty) \to \R$ and $\wt \sigma: (0,1) \times [0,\infty)\to (0,\infty)$ are $C^3_b$ (with bounded third derivative). 
Assume that  $v(x,t)$ satisfies \eqref{a28.1} at a fixed time $t\geq 0$ and for $1\leq x \leq n$. 
Then for $2\leq x \leq n-1$,
 \begin{align}\label{m12.10}
\left(n-2\right)
&\E (v(x, t+1)-v(x,t))\\
&=  -
\frac 1 {\sqrt{ \pi}(n-1)}
  \wt\sigma_{x,n}(x,t)
+ \frac 2 {(n-1)^2}  \wt\mu_{xx,n}(x,t)
 + O(n^{-3}), \notag\\
\left(n-2\right)
&\E (v(x, t+1)^2-v(x,t)^2)\label{m12.11}\\
 &=  -
\frac 2 {\sqrt{ \pi}(n-1)}
\left( \mu(x,t)
  \wt\sigma_{x,n}(x,t)
+ \sigma(x,t)
 \wt\mu_{x,n}(x,t)
\right)\notag\\
&\quad+ \frac 2 {(n-1)^2}
\left(\wt\mu(x,t)^2
+ \wt\sigma(x,t)^2\right)_{xx,n}
 + O(n^{-3}) .\notag
 \end{align}
\end{theorem}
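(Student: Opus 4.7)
Only $x_t \in \{x-1, x, x+1\}$ alters $v(x, \cdot)$, each occurring with probability $1/(n-2)$, so
$$(n-2)\,\E[v(x, t+1) - v(x, t)] = \sum_{y = x-1}^{x+1} \E\bigl[v(x, t+1) - v(x, t) \mid x_t = y\bigr],$$
and similarly for $v^2$. Conditioning on the triple $V_j := v(y+j, t)$ for $j \in \{-1, 0, 1\}$, set $S = \sum_j V_j$, $Q = \sum_j V_j^2$, $r = \sqrt{Q - S^2/3}$. Step~1 produces $(A_{-1}, A_0, A_1)$ uniform on the circle of radius $r$ centred at $(S/3, S/3, S/3)$ in the plane $\{\sum a_j = S\}$. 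Parametrising by $\theta$ uniform on $[0, 2\pi)$ yields, for $i \neq j$,
$$\E[A_i \mid S, Q] = S/3, \quad \E[A_i^2 \mid S, Q] = Q/3, \quad \E[|A_i - A_j| \mid S, Q] = \tfrac{2\sqrt{2}}{\pi}\,r,$$
with analogous formulas for $\max$, $\min$, and their squares. Averaging over the Step~2 coin flip $\kappa_t$ then gives $\E[v(x, t+1) \mid S, Q, x_t = x \pm 1] = S/3 \pm \tfrac{\sqrt{2}}{2\pi}\,r$, while the $y = x$ case yields simply $S/3$ (the max/min contributions cancel); the analogous squared expectations combine $Q/3$ with cross terms of the form $\pm\tfrac{\sqrt{2}}{3\pi}\,Sr$.

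\textbf{Take expectations and Taylor expand.} With $V_j \sim N(\mu_j, \sigma_j^2)$ for $\mu_j = \mu(y+j, t)$, $\sigma_j = \sigma(y+j, t)$, expand $\mu(x+k, t) = \mu(x, t) + \tfrac{k}{n-1}\wt\mu_{x,n} + \tfrac{k^2}{2(n-1)^2}\wt\mu_{xx,n} + O(n^{-3})$ (and similarly for $\sigma$). Summing the three $\E[S/3]$ contributions minus $3\mu(x,t)$, the first derivatives cancel and the second derivatives telescope to $\tfrac{2}{(n-1)^2}\wt\mu_{xx,n}(x,t) + O(n^{-3})$, producing the $\wt\mu_{xx,n}$ term of~\eqref{m12.10}. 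For~\eqref{m12.11}, the identity $\E[Q/3] = \tfrac13\sum_j(\mu_j^2 + \sigma_j^2)$ reduces the analogous sum to a discrete Laplacian of $\mu^2 + \sigma^2$, giving $\tfrac{2}{(n-1)^2}(\wt\mu^2 + \wt\sigma^2)_{xx, n}$. The only remaining contributions are $\tfrac{\sqrt{2}}{2\pi}\bigl(\E[r_{(x-1)}] - \E[r_{(x+1)}]\bigr)$ in~\eqref{m12.10} and $\tfrac{\sqrt{2}}{3\pi}\bigl(\E[Sr]_{(x-1)} - \E[Sr]_{(x+1)}\bigr)$ in~\eqref{m12.11}, where $(\cdot)_{(y)}$ denotes the triple centred at~$y$.

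\textbf{Main step: evaluate $\E[r_{(y)}]$.} Write $r = \|U\|$, where $U = P\vec V$ is the orthogonal projection of $\vec V$ onto $\{u_{-1} + u_0 + u_1 = 0\}$ with $P = I - \tfrac13 J$. Then $U$ is a 2D Gaussian with mean $P\vec\mu$ (of norm $O(n^{-1})$) and covariance $K = P\Sigma P$ of trace $\operatorname{tr}(K) = \tfrac23\sum_j\sigma_j^2$. The crucial cancellation is that $\sigma(y+k, t)^2 - \sigma(y, t)^2$ is odd in $k$ to leading order, so $\sum_j\sigma_j^2 = 3\sigma(y, t)^2 + O(n^{-2})$ and hence $\operatorname{tr}(K) = 2\sigma(y, t)^2 + O(n^{-2})$. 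Perturbing around the isotropic baseline $N(0, \sigma(y, t)^2 I_2)$---for which $\E\|U\| = \sigma(y, t)\sqrt{\pi/2}$---via the eigenvalue expansion $\E\sqrt{\lambda_1 Z_1^2 + \lambda_2 Z_2^2} = \sqrt{(\lambda_1+\lambda_2)\pi/4}\cdot\bigl(1 + O((\lambda_1-\lambda_2)^2/(\lambda_1+\lambda_2)^2)\bigr)$ combined with a Taylor expansion in the mean shift yields $\E[r_{(y)}] = \sigma(y, t)\sqrt{\pi/2} + O(n^{-2})$. Taking the difference at $y = x \pm 1$ produces $-\tfrac{2\sqrt{\pi/2}}{n-1}\wt\sigma_{x,n} + O(n^{-3})$, and multiplying by $\tfrac{\sqrt{2}}{2\pi}$ yields the $-\tfrac{1}{\sqrt{\pi}(n-1)}\wt\sigma_{x,n}$ term. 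For~\eqref{m12.11}, $S$ and $U$ are independent in the flat case, so $\E[Sr] = \E[S]\E[r] + O(n^{-1})$; subtracting the leading product $\E[S]\E[r] \approx 3\mu(y, t)\sigma(y, t)\sqrt{\pi/2}$ at $y = x \pm 1$ and using $(\mu\sigma)_x = \mu_x\sigma + \mu\sigma_x$ recovers the $\mu\wt\sigma_x + \sigma\wt\mu_x$ combination.

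\textbf{The hardest part} is rigorously controlling the $O(n^{-2})$ remainder in $\E\|U\|$: both the covariance anisotropy and the mean shift (each $O(n^{-1})$) contribute only at second order, but verifying this requires careful perturbative accounting using either the elliptic-integral expansion $E(k) = \pi/2 - \pi k^2/8 + O(k^4)$ or a direct Taylor expansion of $\sqrt{\cdot}$ that leverages finiteness of $\E[\|W\|^{-1}]$ in two dimensions. A parallel refinement of $\E[Sr]$ beyond the leading product $\E[S]\E[r]$ is similarly needed for~\eqref{m12.11}.
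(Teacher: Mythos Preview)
Your decomposition into the three cases $x_t\in\{x-1,x,x+1\}$, the circle parametrisation, the cancellation of the $\max/\min$ contributions when $x_t=x$, and the resulting split into a discrete-Laplacian piece plus a difference $\E[r_{(x-1)}]-\E[r_{(x+1)}]$ (respectively $\E[Sr]_{(x-1)}-\E[Sr]_{(x+1)}]$) are exactly how the paper proceeds; the numerical constants match once one accounts for your $r=\sqrt{Q-S^2/3}$ being $\sqrt{3/2}$ times the paper's radius parameter.

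The genuine difference is in how $\E[r_{(y)}]$ and $\E[(S r)_{(y)}]$ are estimated. The paper writes these as three-dimensional Gaussian integrals $I_1,I_2$, passes to orthogonal coordinates $(s,b,w)$ aligned with the plane $\{u+y+z=0\}$, Taylor-expands the perturbed density, and evaluates the resulting moments $\int p(s,b,w)\sqrt{b^2+s^2}\,e^{-\omega^2/2}$ via a Bessel-function identity from Gradshteyn--Ryzhik. This produces an \emph{explicit} $O(\eps^2)$ coefficient, namely $I_1=\tfrac{\sqrt{3\pi}\beta}{2}\bigl(1+\tfrac{\eps^2}{2\beta^2}(\delta_1^2+\tfrac12\delta_2^2)\bigr)+O(\eps^3)$, and similarly for $I_2$. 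Your route---viewing $r=\|U\|$ with $U$ a two-dimensional Gaussian and perturbing around the isotropic case using $\E\sqrt{\lambda_1Z_1^2+\lambda_2Z_2^2}=\sqrt{(\lambda_1+\lambda_2)\pi/4}\,(1+O((\lambda_1-\lambda_2)^2/(\lambda_1+\lambda_2)^2))$ together with a mean-shift expansion---is cleaner and avoids special functions.

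However, there is a gap you have not closed. You obtain $\E[r_{(y)}]=\sigma(y,t)\sqrt{\pi/2}+O(n^{-2})$, but the theorem requires the \emph{difference} $\E[r_{(x-1)}]-\E[r_{(x+1)}]$ to be known to $O(n^{-3})$. An unstructured $O(n^{-2})$ remainder would only give an $O(n^{-2})$ error after differencing. What you need---and what the paper establishes by computing the $\eps^2$ coefficient explicitly and then observing that $\delta_1^2-\delta_3^2$, $\delta_2^2-\delta_4^2$, $\delta_1\delta_2-\delta_3\delta_4$ are all $O(\eps)$ because $\delta_k$'s are derivatives of $\wt\mu,\wt\sigma$ at neighbouring sites---is that the $O(n^{-2})$ correction in $\E[r_{(y)}]$ (and in $\E[(Sr)_{(y)}]$) is itself a smooth function of $y$ at the macroscopic scale, so that its increment across two lattice steps is $O(n^{-3})$. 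Your perturbative framework does give this: the second-order correction is a polynomial in the anisotropy $\lambda_1-\lambda_2$ and the mean shift $P\vec\mu$, each of which is $(n-1)^{-1}$ times a smooth function of the local derivatives $\wt\mu_x,\wt\sigma_x$; but you must say so explicitly rather than jumping from ``$+\,O(n^{-2})$'' to ``difference $=O(n^{-3})$''. The same refinement is needed (and is more delicate) for $\E[Sr]$, where the $O(n^{-1})$ cross-covariance term between $S$ and $U$ contributes an $O(n^{-1})$ correction whose smoothness in $y$ must likewise be verified; the paper handles this via the extra $\eps^2\sqrt{3\pi}\,\delta_1\delta_2$ term in $I_2$.
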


\begin{remark} \label{a17.1}
We will outline how \eqref{m12.10}-\eqref{m12.11} lead to a system of PDEs.

We have not proved that 
\eqref{a28.1} is the limiting distribution of the system when $n\to\infty$ but we expect that
\begin{align*}
\E& (v(x, t+1)-v(x,t)) = \mu(x, t+1) - \mu(x,t) \\
&= \wt\mu(x/n, t/n+1/n) - \wt\mu(x/n,t/n)
\approx
\wt\mu_{t,n}(x,t).
\end{align*}
We combine this with \eqref{m12.10} to obtain
 \begin{align*}
&\left(n-2\right)
\wt\mu_{t,n}(x,t)
\approx  -
\frac 1 {\sqrt{ \pi}(n-1)}
 \wt\sigma_{x,n}(x,t)
+ \frac 2 {(n-1)^2}  \wt\mu_{xx,n}(x,t)
 + O(n^{-3}) .\notag
 \end{align*}
We  rescale space and time, multiply both sides by $(n-1)^2/2$ and ignore the error terms so that
 \begin{align*}
&\frac{(n-1)^2(n-2)}2
\frac {\prt} {\prt t} \wt\mu(x,t)=  -
\frac {n-1} {2\sqrt{ \pi}}
 \frac {\prt} {\prt x} \wt\sigma(x,t)
+  \frac {\prt^2} {\prt x^2} \wt\mu(x,t) .\notag
 \end{align*}
We remove the factor $\frac{(n-1)^2(n-2)} 2 $ from the left hand side  by rescaling time and we obtain 
 \begin{align*}
&
\frac {\prt} {\prt t} \wt\mu(x,t)=  -
\frac {n-1} {2\sqrt{ \pi}}
 \frac {\prt} {\prt x} \wt\sigma(x,t)
+  \frac {\prt^2} {\prt x^2} \wt\mu(x,t) .\notag
 \end{align*}

A completely analogous argument starting with \eqref{m12.11} yields
\begin{align*}
 \frac \prt{\prt t} &
\left(\wt\mu(x,t)^2 + \wt\sigma(x,t)^2\right)\\
&=   \frac {\prt^2} {\prt  x^2} \left(\wt\mu(x,t)^2 + \wt\sigma(x,t)^2\right)
-   \frac {n-1 }{\sqrt{ \pi}}
\left( \wt\mu(x,t) \frac {\prt} {\prt  x} \wt\sigma(x,t)
+ \wt\sigma(x,t) \frac {\prt} {\prt  x} \wt\mu(x,t)\right).\notag
\end{align*}

We set  
\begin{align}\label{a13.1}
\lambda = \frac {n-1} {2\sqrt{ \pi}}
\end{align}
and change the notation from $\wt\mu$ and $\wt\sigma$ to $\mu$ and $\sigma$ to obtain the following  form of these equations, 
\begin{align}\label{m29.5}
\mu_t &= \Delta_x \mu
- \lambda \sigma_x,\\
(\sigma^2 + \mu^2)_t&=  \Delta_x (\sigma^2 + \mu^2)
- 2\lambda (\mu \sigma)_x.\label{m29.6}
\end{align}

For boundary conditions, we take
\begin{align}
\sigma(a) = \sigma(b) &= 0.\label{m29.8}
\end{align}
We offer a heuristic justification for \eqref{m29.8}. Billiard ball velocities become ordered and stay ordered at the endpoints of the system because there are no constraints on one side preventing the increasing ordering of the velocities at the endpoints of the configuration. Hence, $\sigma$ instantaneously becomes and stays equal to zero at the endpoints. It is not obvious that \eqref{m29.8} are sufficient for uniqueness of solutions, but we know that uniqueness holds for a related (simplified) set of equations in \cite{KBJS}.

\end{remark}

\begin{remark}\label{rev1.6}
A straightforward formal calculation transforms the system \eqref{m29.5}-\eqref{m29.6}into
\begin{align*}
 \begin{cases}
\mu_t &=  \Delta \mu_{} - \lambda \sigma_x,\\
\sigma_t &= \Delta \sigma_{} -  \lambda \mu_x + \frac 1 \sigma (\sigma_x^2  +\mu_x^2 ),
\end{cases}
\end{align*}
and these equations can be rescaled as
\begin{align*}
 \begin{cases}
\frac{1}{\lambda}\mu_t &=  \frac{1}{\lambda}\Delta \mu_{} - \sigma_x,\\
\frac{1}{\lambda}\sigma_t &= \frac{1}{\lambda}\Delta \sigma_{} -  \mu_x + \frac{1}{\lambda}\cdot\frac 1 \sigma (\sigma_x^2  +\mu_x^2 ).
\end{cases}
\end{align*}

The parameter $\lambda$ should be thought of as large. It represents the spatial size of the discrete system, so in simulations it typically takes a value larger than $1,000$.
Hence, the following limiting case is of interest,
\begin{align}\label{rev1.3}
 \begin{cases}
\frac{1}{\lambda}\mu_t &=  - \sigma_x,\\
\frac{1}{\lambda}\sigma_t &=  -  \mu_x.
\end{cases}
\end{align}
After rescaling time by $\lambda$ we obtain,
\begin{align}\label{rev1.2}
 \begin{cases}
\mu_t &=  - \sigma_x,\\
\sigma_t &=  -  \mu_x.
\end{cases}
\end{align}
This implies 
\begin{align}\label{m29.2}
     \mu_{tt} \sim  \mu_{xx} \qquad \mbox{and} \qquad  \sigma_{tt} \sim  \sigma_{xx}
\end{align}
 suggesting that for large values of $\lambda$, the dynamics might be similar to that of the wave equation whenever the nonlinearity is small. 
 
However, we also note that $\sigma \geq 0$ which limits the extent to which the wave equation analogy can be applied. The paper \cite{KBJS} is devoted exclusively to studying the equations \eqref{rev1.2} with the constraint $\sigma \geq 0$.

The paper \cite{KBAO} contains a result on the existence of solutions to \eqref{rev1.2} with given terminal values. 
\end{remark}

\section{Numerical Examples}\label{a9.2}

We will show the results of 100,000 simulations with 1,000 balls and compare them to numerical
solutions of PDEs \eqref{m29.5}-\eqref{m29.6} obtained using a standard finite difference method with 3201 equispaced spatial grid points.
First we show figures supporting the conjecture that the pinned balls system
is represented by modulated white noise. Then we will present numerical
evidence for the agreement between the evolution of parameters $\mu$ and $\sigma$ in the
collision model and the PDEs.

We will present the results for only one representative set of initial conditions, namely,
\begin{align*}
\mu(x,0) &= {\rm Erf}\left(27(x-1/2)^3\right),\\
\sigma(x,0)&= \frac{[1-\cos(2\pi x)]}{1000},
\end{align*}
for $x\in[0,1],$ where Erf denotes the error function.
The functions were rescaled from the interval $[0,1]$ to $[1, 1000]$
for the collision simulations.

We define $T$ as the time  until the
apparent total freeze, i.e.,  the time when the variance $\sigma^2$ is almost
identically equal to 0 (compared to typical values in the main part of the evolution).
We have approximately $T\approx  0.000120162$ for the timescale used in  
\eqref{m29.5}-\eqref{m29.6} with $\lambda = (n-1) / (2 \sqrt{\pi})$ (to match \eqref{a13.1})
and $n=1,000$ (the number of balls). 

The distribution of the empirical white noise $W$ at time $0.37T$ estimated from the simulations  matches
the normal distribution quite well, according to Fig. \ref{figa12.2}. The values of the empirical white noise were
calculated for a spatial position $x$ by subtracting the mean $\mu(x,0.37T)$ and dividing by the standard deviation $\sigma(x,0.37T)$,
where the last two functions were
evaluated as averages over all runs.
\begin{figure} [h]
\includegraphics[width=0.5\linewidth]{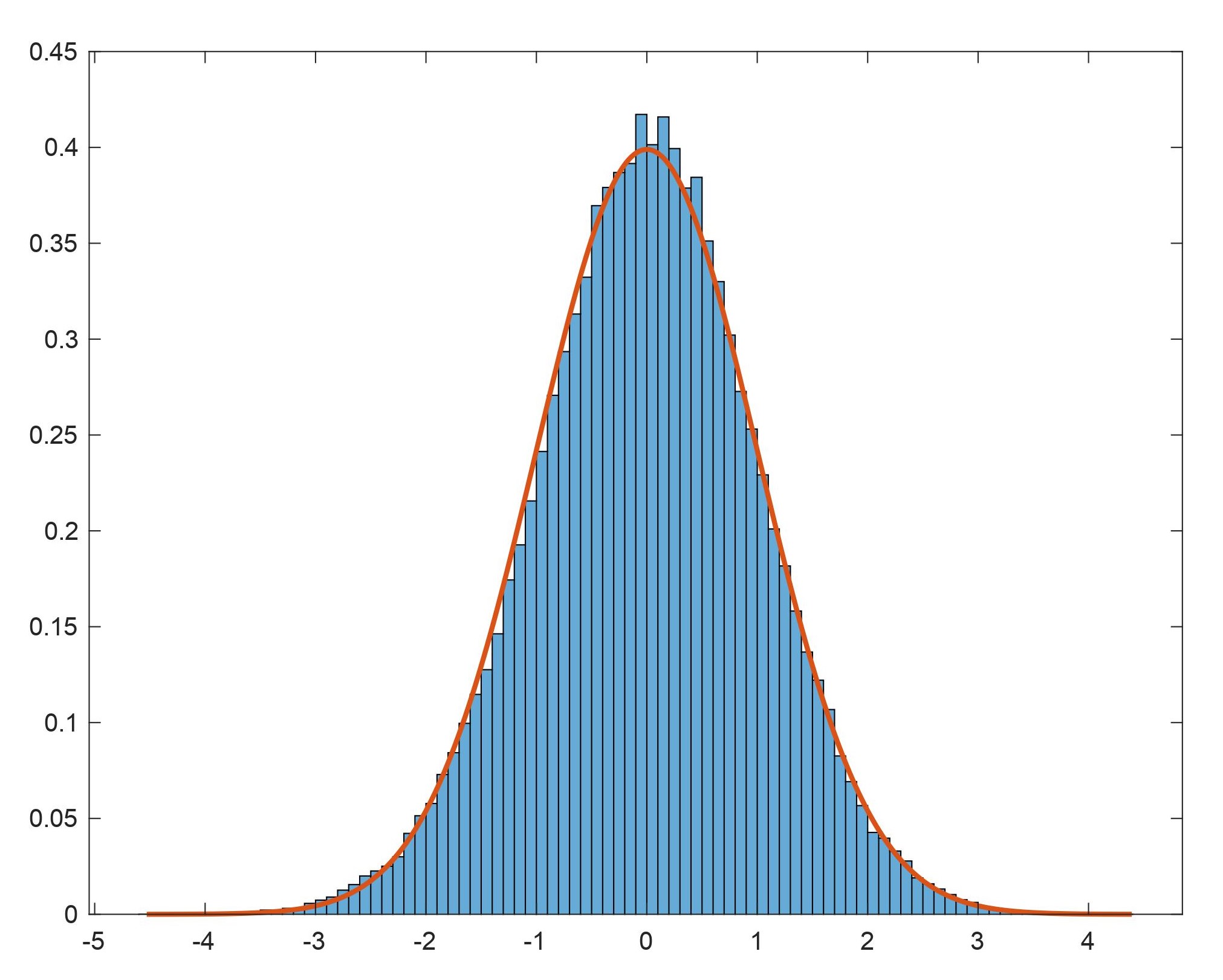}
\caption{
Empirical histogram (blue) of white noise values at the time $0.37T$.  The standard normal
density is drawn in red.
}
\label{figa12.2}
\end{figure}

Correlations of the adjacent velocities should be equal to 0 assuming the white noise hypothesis.
If we have a sample of size $n$ from the bivariate standard normal distribution then the density of the empirical correlation coefficient is
\begin{align*}
f(r) = \frac{(1-r^2)^{(n-4)/2}}{B(1/2,(n-2)/2},
\end{align*}
where $B$ is the beta function. The standard deviation of this distribution is $1/\sqrt{n-1}$. For $n=1,000$, the standard deviation is about $0.032$.
 We show in Fig. \ref{figa12.4} that correlation values are not much larger than the theoretical value.
\begin{figure} \includegraphics[width=0.7\linewidth]{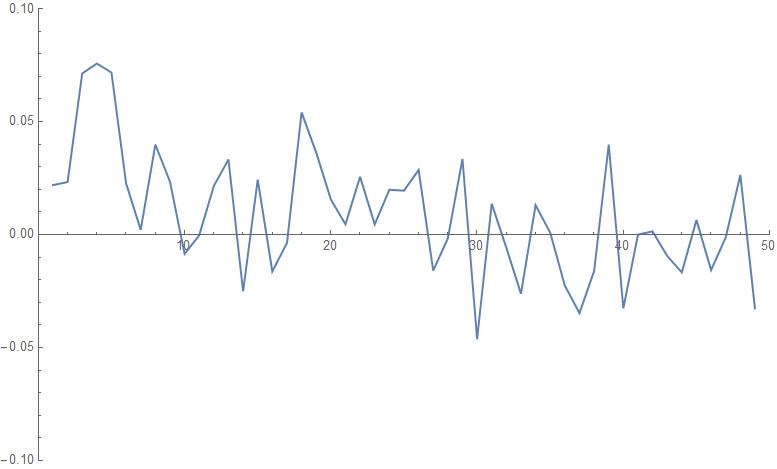}
\caption{
 Correlation between white noise values at the distance $k$ at time $0.37 T$, for $k=1,\dots , 50$,  for a single run. The values of the empirical white noise were
calculated for a spatial position $x$ by subtracting the mean $\mu(x,0.37T)$ and dividing by the standard deviation $\sigma(x,0.37T)$,
where the last two functions were
evaluated as averages over all runs.
}
\label{figa12.4}
\end{figure}

Fig. \ref{figa12.6} shows that the joint distribution of the noise at adjacent sites is rotationally
symmetric, as expected from white noise. The color is added to improve perception. 
\begin{figure} \includegraphics[width=0.5\linewidth]{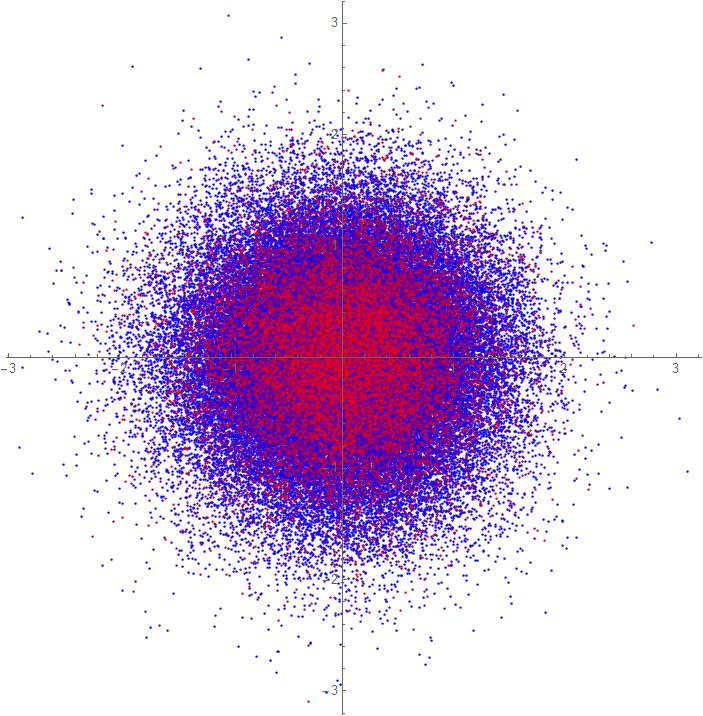}
\caption{
 Pairs of values of the noise $(W(300,0.37T),W(301,0.37T))$ for 100,000 runs of the simulation. The  RGB scheme is
  $((k/100,000)^5,0, 1-(k/100,000)^5)$ where $k$ is the number of the simulation.
}
\label{figa12.6}
\end{figure}

Fig. \ref{figa12.1} shows different stages of the evolution of $\mu$ and $\sigma$.
The agreement between the moments $\mu$ and $\sigma$ estimated from simulations and the solutions to PDEs \eqref{m29.5}-\eqref{m29.6}
is excellent. 
Fig. \ref{figa12.1}  supports our choice of the initial conditions---the resulting evolution of $\mu$ and $\sigma$
has interesting complexity.
\begin{figure}[h]
    \centering
    \begin{minipage}{0.33\textwidth}
        \centering
        \includegraphics[width=0.99\textwidth]{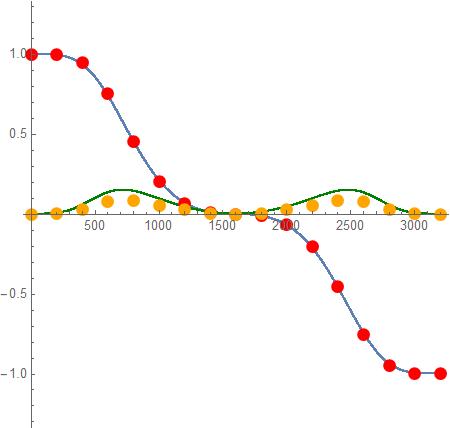}
    \end{minipage}\hfill
    \begin{minipage}{0.33\textwidth}
        \centering
        \includegraphics[width=0.99\textwidth]{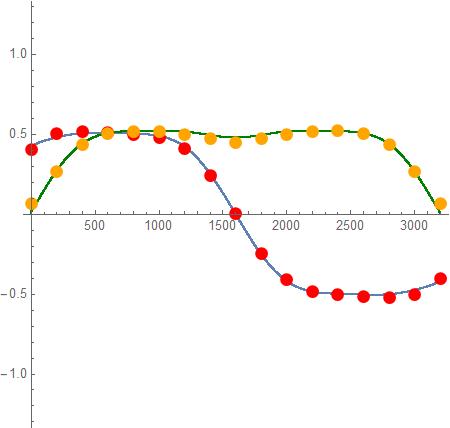}  
    \end{minipage}
    \begin{minipage}{0.33\textwidth}
        \centering
        \includegraphics[width=0.99\textwidth]{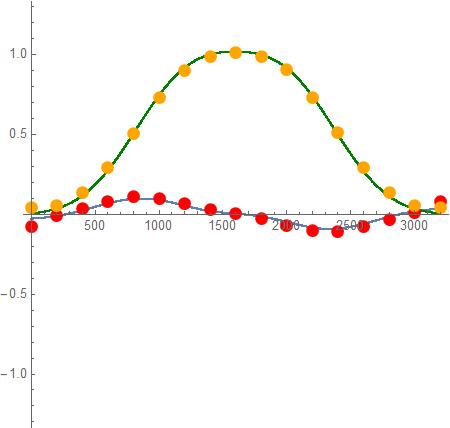}  
    \end{minipage}
    \\
\begin{minipage}{0.33\textwidth}
        \centering
        \includegraphics[width=0.99\textwidth]{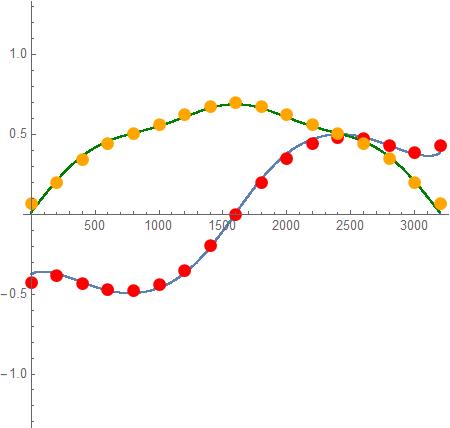}  
    \end{minipage}\hfill
    \begin{minipage}{0.33\textwidth}
        \centering
        \includegraphics[width=0.99\textwidth]{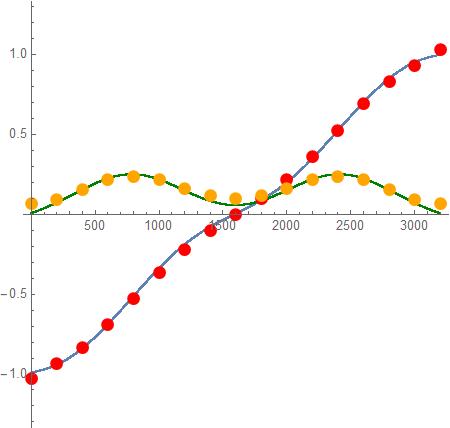}  
    \end{minipage}
    \begin{minipage}{0.33\textwidth}
        \centering
        \includegraphics[width=0.99\textwidth]{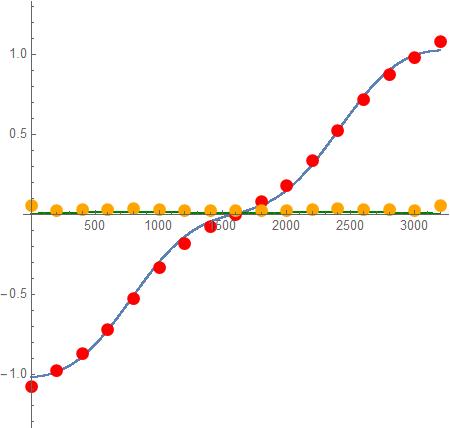}  
    \end{minipage}
    \caption{The moments $\mu$ and $\sigma$ at times $0.01 T,0.2 T,0.4 T,0.6 T,0.8 T$ and $0.99 T$ (top from left to right, then bottom left to right). 
 The mean $\mu$ (dotted red) and $\sigma$ (dotted orange) were estimated by averaging values over 100,000 repetitions of the pinned balls model.
 The mean $\mu$ (solid blue) and $\sigma$ (solid green) were numerically computed using the equations \eqref{m29.5}-\eqref{m29.6}.
 The curves were horizontally and vertically rescaled to show agreement.} \label{figa12.1}
\end{figure}

\section{A System of PDEs}\label{a9.5}

  \subsection{Different coordinates }\label{a17.4}
 There is  another representation 
of our PDEs that we will need for the discussion later in this section.
Let
 $$ E(x,t) = \frac{1}{2}\left(\mu(x,t)^2 +\sigma(x,t)^2\right).$$
  The equation for $\mu$ can then be written as
  $$ \mu_t - \Delta \mu = - \lambda (\sqrt{2E - \mu^2})_x.$$
A computation shows
\begin{align*}
 E_t &= \mu \mu_t + \sigma \sigma_t \\
 &= \mu (\Delta \mu - \lambda \sigma_x) + \sigma \Delta \sigma - \lambda \sigma \mu_x + (\sigma_x)^2 + (\mu_x)^2\\
 &= \mu \Delta \mu + (\mu_x)^2 + \sigma \Delta \sigma + (\sigma_x)^2 - \lambda (\mu \cdot \sigma )_x \\
 &= \Delta E - \lambda (\mu \sqrt{2E - \mu^2})_x.
 \end{align*}
 This leads to the following version of \eqref{m29.5}-\eqref{m29.6},
 \begin{align}\label{rev1.1}
\begin{cases}
 \mu_t - \Delta \mu &=   - \lambda (\sqrt{2E - \mu^2})_x ,\\
  E_t -\Delta E &= - \lambda (\mu \sqrt{2E - \mu^2})_x ,  \\
  E(a,t) &= \mu(t,a)^2/2,\\
  E(b,t) &= \mu(t,b)^2/2. 
  \end{cases}
  \end{align}

\subsection{Related hydrodynamic models}\label{rev1.7}

We are grateful to Balint T\'oth for the following remarks on related hydrodynamic limits  (but we take responsibility for possible inaccuracies).

As shown in \cite{TV03} there are some combinatorial conditions for the rates of local dynamics
in interacting particle systems with several conservation laws, in order that the ergodic
translation invariant measures be of product structure (see condition (C) in \cite{TV03}). That
paper is formulated in the context of discrete local observables but all arguments apply to continuous observables. In the
present context this  implies that in order to get the product Gaussians as Gibbs
measures the rates of swaps $(v(x,t), v(x+1,t)) \to (v(x+1,t), v(x,t))$ must be chosen as
$(r(v(x,t))-r( v(x+1,t)))\bone_{\{v(x,t)\geq v(x+1,t)\}}$
where $r: \R\to \R$ is a  non-decreasing and non-constant function (see section 2 of \cite{TV03}). A natural  choice could be $r(u)=u$. 
This would lead to 
the PDE's
 \begin{align}\label{rev1.5}
 \begin{cases}
 \mu_t - \Delta \mu &=   - \lambda (2E - \mu^2)_x ,\\
  E_t -\Delta E &= - \lambda (\mu 2E - \mu^2)_x ,
  \end{cases}
  \end{align}
rather than our \eqref{rev1.1}:
 \begin{align*}
\begin{cases}
 \mu_t - \Delta \mu &=   - \lambda (\sqrt{2E - \mu^2})_x ,\\
  E_t -\Delta E &= - \lambda (\mu \sqrt{2E - \mu^2})_x . 
  \end{cases}
  \end{align*}
  
The Onsager relations discussed in \cite{TV03} say that the thermodynamic Gibbs entropy (of the equilibrium measures of the microscopic system) expressed as a function of the conserved quantities is a Lax entropy for the PDE.  This  feature links the microscopic system (Gibbs entropy of the stationary/ergodic measures) with the macroscopic hydrodynamic PDE (Lax entropy). This feature fails to hold in our system and ansatz. 
It has been shown later in \cite{GS}that the Onsager relations are valid in wider generality than in \cite{TV03}. Namely, there is no need for the product structure of the stationary measure.  However, in the more general setting it is more difficult (if possible at all) to find explicit formulas for the Gibbs entropy. 

\subsection{Justification of our model}\label{rev1.8}
Although the results reviewed in Section \ref{rev1.7} indicate that our model cannot lead to ``modulated white noise'' local equilibrium measure, we will try to justify our approach.

(i) If we repeat Step \ref{a29.1} multiple times, say $n^\alpha$ times, for every single Step \ref{a29.2}, this will basically affect only $\lambda$, and it will change its value from order $n$ to $n^{1-\alpha}$. Assuming $\alpha\in(0,1)$, we can expect that the local equilibrium measure will be close to Gaussian because Step \ref{a29.1} is locally a mixing process on the sphere.

The postulate that white noise governs the evolution of $v(x,t)$ is motivated by the postulate of equidistribution of energy. Due to conservation of energy,  $\sum_{x\in N} v(x,t)^2$ is more or less constant over small time intervals in a small neighborhood $N$. Hence, for a fixed $t$, one expects the vector $\{v(x,t), x\in N\}$ to be approximately uniformly distributed over the sphere and, therefore, to be approximately i.i.d. normal. 

(ii) Step \ref{a29.2} may introduce correlation between values of $v$ at various locations for a fixed time. Specifically, we may expect negative correlation between adjacent sites because Step \ref{a29.2} orders the values in the increasing manner. However, Step \ref{a29.2} is associated with the energy redistribution in Step \ref{a29.1} on three adjacent sites. Hence, one can expect that the correlation between the first two and the last two sites in the triplet would annihilate the correlation effects in the difference equation calculations. One can expect much lower correlation effects from the sites at least two units away.

Simulations discussed in Section \ref{a9.2} (see especially Fig. \ref{figa12.4}) suggest that correlations are small. 

(iii)
Although our assumption that the local equilibrium measure is modulated white noise is questionable in view of the results  reviewed in Section \ref{rev1.7}, the excellent agreement between simulations of $\mu$ and $\sigma$ and numerical solutions of our PDEs shown in Fig. \ref{figa12.1} justifies our model as a very good, if not perfect, approximation of the true hydrodynamic limit.

(iv)
Starting with \eqref{rev1.5} and
proceeding as in Remark \ref{rev1.6}, in particular,
 ``sending $\lambda\to\infty$'' as in \eqref{rev1.3}-\eqref{rev1.2}, we would obtain a system of Burgers-like equations
\begin{align}\label{rev1.4}
 \begin{cases}
\mu_t &=  - \sigma_x \sigma,\\
\sigma_t &=  -  \mu_x \sigma.
\end{cases}
\end{align}
rather than the transport equations \eqref{rev1.2}. The equations \eqref{rev1.4} develop shocks (see \cite{Smoller,Serre}) and, therefore, present a major technical challenge for the hydrodynamic limit theory.
On the other hand, the transport equations \eqref{rev1.2} combined with the freezing condition, i.e., $\mu_t=\mu_x=\sigma_t=\sigma_x=0$ whenever $\sigma=0$, are tractable and display interesting behavior---this has been shown in \cite{KBJS}. Hence, the present approach seems to have intrinsic pure mathematical value independent of the physical applications.

(v) The paper \cite{FT} is about a very closely related system and the corresponding hydrodynamic limit, going even beyond the shocks. That model can be represented in our terms as follows: particles have three types of velocities, $-1, 0, +1$. Otherwise the dynamics is very similar to ours: momentum and kinetic energy are conserved, particles move according to their velocities, with only ``monotone'' swaps allowed. But the rate of near-neighbor swaps depends on the values of the velocities, unlike in our system. The system of PDEs obtained in the hydrodynamic limit is the so-called Leroux-system. These PDEs develop shocks, unlike our limiting PDEs. 

\section{Proof of partial difference equations}\label{a17.2}

\begin{proof}[Proof of Theorem \ref{a28.2}] Fix $t\geq 0$ and $3\leq x \leq n-2$.
For a fixed $2\leq y\leq n-1$, $\P(x_t=y) = 1/(n-2)$. 
In view of \eqref{a28.1}, $\E v(y,t) =\mu(y,t)$ for all $y$.
Since $x_t$ could be $x-1, x $ and $x+1$ with equal probabilities, symmetry and Step \ref{a29.1} (see especially \eqref{a28.3})  show that
\begin{align*}
&\E  v_-(x,t+1) = \frac{n-5}{n-2} \E v(x,t)  \\
&\qquad + \P(x_t=x-1)  \frac 1 3 (\E v(x-2,t) + \E v(x-1,t) + \E v(x,t))
\\ 
&\qquad+ \P(x_t=x)  \frac 1 3 (\E v(x-1,t) + \E v(x,t) + \E v(x+1,t))\\
&\qquad+ \P(x_t=x+1)   \frac 1 3 (\E v(x,t) + \E v(x+1,t) + \E v(x+2,t)) \\ 
&= \frac{n-5}{n-2} \mu(x,t)  \\
&\qquad +\frac 1{n-2}  \cdot  \frac 1 3 (\mu(x-2,t) + \mu(x-1,t) + \mu(x,t))
\\ 
&\qquad+ \frac 1{n-2}  \cdot  \frac 1 3 (\mu(x-1,t) + \mu(x,t) + \mu(x+1,t))\\
&\qquad+ \frac 1{n-2}  \cdot  \frac 1 3 (\mu(x,t) + \mu(x+1,t) + \mu(x+2,t)) \\
&\quad=  \mu(x,t)-\frac{3}{n-2} \mu(x,t)  \\
&\qquad +\frac 1{n-2} \cdot \frac 1 3 (\mu(x-2,t) +2 \mu(x-1,t) +3 \mu(x,t) + 2\mu(x+1,t) + \mu(x+2,t)) \\
&\quad=  \mu(x,t)  \\
&\qquad +\frac 1{n-2} \cdot \frac 1 3 (\mu(x-2,t) +2 \mu(x-1,t) -6 \mu(x,t) + 2\mu(x+1,t) + \mu(x+2,t)) ,
\end{align*}
so
\begin{align}\notag
&(n-2)\E ( v_-(x,t+1) - v(x,t))=(n-2)\E ( v_-(x,t+1) - \mu(x,t))  \\
&\qquad= \frac 1 3 (\mu(x-2,t) +2 \mu(x-1,t) -6 \mu(x,t) + 2\mu(x+1,t) + \mu(x+2,t)) .\label{a29.3}
\end{align}

Our model, encapsulated in \eqref{a28.1}, implies that $\E v(y,t)^2 =\mu(y,t)^2 + \sigma(y,t)^2$ for all $y$. We use symmetry, Step \ref{a29.1} and \eqref{a28.4} to obtain the following formula analogous to \eqref{a29.3},
\begin{align}\label{m19.2}
(n-2)&\E (v_-(x,t+1)^2-v(x,t)^2) \\
&\quad= \frac 1 3 \Big(\sigma(x-2,t)^2 + \mu(x-2,t)^2 +2 (\sigma(x-1,t)^2 + \mu(x-1,t)^2) \notag \\
&\qquad -6( \sigma(x,t)^2+ \mu(x,t)^2) + 2(\sigma(x+1,t)^2+ \mu(x+1,t)^2) \notag \\
&\qquad + \sigma(x+2,t)^2+ \mu(x,t+1)^2\Big). \notag 
\end{align}

Let
\begin{align}\label{m12.7}
\wh\mu(x,t) &= (n-2)\E ( v(x,t+1) - v_-(x,t+1)),\\
\calE(x,t)&= (n-2)\E (v(x,t+1)^2-v_-(x,t+1)^2) .\label{m12.8}
\end{align}
Then, after adding $\wh{\mu}$ and $\calE$ to both sides of \eqref{a29.3} and \eqref{m19.2}, respectively, we obtain
\begin{align}\label{j23.3}
&\left(n-2\right)
\E (v(x, t+1)-v(x,t))\\
&\quad = \frac 1 3 (\mu(x-2,t) +2 \mu(x-1,t) -6 \mu(x,t) + 2\mu(x+1,t) + \mu(x+2,t)) + \wh\mu(x,t),\notag \\
&\quad = \frac 2 {(n-1)^2}  \wt\mu_{xx,n}(x,t)
 + O(n^{-4}) + \wh\mu(x,t),\notag \\
&\left(n-2\right)
\E (v(x, t+1)^2-v(x,t)^2)\label{j23.4}\\
 &\quad= \frac 1 3 \Big(\sigma(x-2,t)^2 + \mu(x-2,t)^2 +2 (\sigma(x-1,t)^2 + \mu(x-1,t)^2) \notag \\
&\qquad -6( \sigma(x,t)^2+ \mu(x,t)^2) + 2(\sigma(x+1,t)^2+ \mu(x+1,t)^2) \notag \\
&\qquad + \sigma(x+2,t)^2+ \mu(x,t+1)^2\Big)+\calE(x,t)\notag\\
 &\quad=  \frac 2 {(n-1)^2}
 \left( \wt\mu(x,t)^2 + \wt \sigma(x,t)^2\right)_{xx,n}
 + O(n^{-4})+\calE(x,t). \notag 
 \end{align}

Next, we observe that by Step \ref{a29.2},
\begin{align*}
& \wh\mu(x,t) \\
&= (n-2) \E ((v_-(x-1,t+1)  - v_-(x,t+1)) \bone_{ v_-(x-1,t+1) > v_-(x,t+1)}\mid x_t =x-1, \kappa_t=1)\\
&\qquad \times \P(x_t=x-1, \kappa_t=1) \\
& + (n-2) \E ((v_-(x-1,t+1)  - v_-(x,t+1)) \bone_{ v_-(x-1,t+1) > v_-(x,t+1)}\mid x_t=x, \kappa_t=-1)\\
&\qquad \times \P(x_t=x, \kappa_t=-1) \\
& + (n-2) \E ((v_-(x+1,t+1)  - v_-(x,t+1)) \bone_{v_-(x,t+1) > v_-(x+1,t+1)}\mid x_t=x,\kappa_t=1)\\
&\qquad \times \P(x_t=x, \kappa_t=1) \\
& + (n-2) \E ((v_-(x+1,t+1)  - v_-(x,t+1)) \bone_{v_-(x,t+1) > v_-(x+1,t+1)}\mid x_t=x+1,\kappa_t=-1)\\
&\qquad \times \P(x_t=x+1, \kappa_t=-1) \\
&= \frac 1 2 \E ((v_-(x-1,t+1)  - v_-(x,t+1)) \bone_{ v_-(x-1,t+1) > v_-(x,t+1)}\mid x_t =x-1, \kappa_t=1) \\
& +  \frac 1 2  \E ((v_-(x-1,t+1)  - v_-(x,t+1)) \bone_{ v_-(x-1,t+1) > v_-(x,t+1)}\mid x_t=x, \kappa_t=-1)\\
& + \frac 1 2  \E ((v_-(x+1,t+1)  - v_-(x,t+1)) \bone_{v_-(x,t+1) > v_-(x+1,t+1)}\mid x_t=x,\kappa_t=1)\\
& + \frac 1 2  \E ((v_-(x+1,t+1)  - v_-(x,t+1)) \bone_{v_-(x,t+1) > v_-(x+1,t+1)}\mid x_t=x+1,\kappa_t=-1).
\end{align*}
If $x_t=x$ then the sequence $(v_-(x-1,t+1) , v_-(x,t+1),v_-(x+1,t+1))$ is exchangeable---this follows from the definition given in Step \ref{a29.1}. Hence, the two middle terms in the above formula cancel each other and we obtain
\begin{align}\label{a30.1}
& \wh\mu(x,t) \\
&= \frac 1 2 \E ((v_-(x-1,t+1)  - v_-(x,t+1)) \bone_{ v_-(x-1,t+1) > v_-(x,t+1)}\mid x_t =x-1, \kappa_t=1)\notag \\
& + \frac 1 2  \E ((v_-(x+1,t+1)  - v_-(x,t+1)) \bone_{v_-(x,t+1) > v_-(x+1,t+1)}\mid x_t=x+1,\kappa_t=-1).\notag
\end{align}

A similar calculation yields
\begin{align}\label{a30.2}
& \calE(x,t) \\
&= \frac 1 2 \E \left(\left(v_-(x-1,t+1)^2  - v_-(x,t+1)^2\right) \bone_{ v_-(x-1,t+1) > v_-(x,t+1)}\mid x_t =x-1, \kappa_t=1\right) \notag\\
& + \frac 1 2  \E \left(\left(v_-(x+1,t+1)^2  - v_-(x,t+1)^2\right) \bone_{v_-(x,t+1) > v_-(x+1,t+1)}\mid x_t=x+1,\kappa_t=-1\right).\notag
\end{align}

Define $a(x,t)$ and $r(x,t)\geq 0$ by
\begin{align}\label{d5.2}
a(x,t) &=\frac 1 3 (v(x-1,t) + v(x,t) + v(x+1,t)),\\
r(x,t)^2 &=\frac 2 3 \left((v(x-1,t)-a(x,t))^2 + (v(x,t)-a(x,t))^2+(v(x+1,t)-a(x,t))^2\right) \label{d5.3}\\
&= \frac 4 9 \big( v(x-1,t)^2 + v(x,t)^2 + v(x+1,t)^2 \label{d5.1}\\
&\qquad- v(x-1,t)  v(x,t) - v(x-1,t)  v(x+1,t) - v(x,t)  v(x+1,t)\big). \notag
\end{align}

If $x_t=x$ then
the vector $(v_-(x-1,t+1) , v_-(x,t+1) , v_-(x+1,t+1))$ is distributed uniformly on the circle in $\R^3$ given by the parametric formula,
\begin{align*}
\left(a(x,t) + r(x,t)\sin \theta,
a(x,t) + r(x,t)\sin( \theta+2\pi/3),
a(x,t) +r(x,t) \sin( \theta+4\pi/3)\right), 
\end{align*}
for $\theta\in[0,2\pi)$.

Let $\calF_{x,t} $ denote the $\sigma$-field generated by $v(x-1,t) , v(x,t) $ and $ v(x+1,t)$.
We have
\begin{align}\label{oc24.3}
\E &((v_-(x-1,t+1)  - v_-(x,t+1)) \bone_{ v_-(x-1,t+1) > v_-(x,t+1)}\mid x_t=x-1, \kappa_t =1, \calF_{x-1,t})\\
&= \frac 1{2\pi}\int_{-\pi/2}^{\pi/2} \Big((a(x-1,t) + r(x-1,t)\sin( \theta+2\pi/3))\notag \\
&\qquad - ( a(x-1,t) + r(x-1,t)\sin( \theta+4\pi/3))\Big) d\theta\notag\\
&= \frac 1{2\pi} \int_{-\pi/2}^{\pi/2}  r(x-1,t)
(\sin( \theta+2\pi/3) - \sin( \theta+4\pi/3))d\theta
 = \frac{\sqrt{3}}\pi  r(x-1,t).\notag
\end{align}
The following formula is analogous,
\begin{align*}
\E &((v_-(x+1,t+1)  - v_-(x,t+1)) \bone_{v_-(x,t+1) > v_-(x+1,t+1)}\mid x_t=x+1,\kappa_t=-1,
\calF_{x+1,t})\\
&= -\frac{\sqrt{3}}\pi  r(x+1,t).
\end{align*}
Thus, \eqref{a30.1} and \eqref{oc24.3} imply that
\begin{align}\label{oc24.5}
\wh\mu(x,t) = \frac {\sqrt{3}} {2\pi}
\E( r(x-1,t) - r(x+1,t)).
\end{align}

Similarly, we see that
\begin{align}\label{oc24.4}
\E &((v_-(x-1,t+1)^2 - v_-(x,t+1)^2) \bone_{ v_-(x-1,t+1) > v_-(x,t+1)}\mid x_t=x-1, \kappa_t =1, \calF_{x-1,t})\\
&= \frac 1{2\pi} \int_{-\pi/2}^{\pi/2} \Big((a(x-1,t) + r(x-1,t)\sin( \theta+2\pi/3))^2 
\notag \\
&\qquad - ( a(x-1,t) + r(x-1,t)\sin( \theta+4\pi/3))^2\Big) d\theta\notag \\
&=  \frac 1{2\pi}\int_{-\pi/2}^{\pi/2} 
2 a(x-1,t)  r(x-1,t)(\sin( \theta+2\pi/3) - \sin( \theta+4\pi/3))d\theta\notag \\
&\quad +  \frac 1{2\pi} \int_{-\pi/2}^{\pi/2} r(x-1,t)^2(\sin^2( \theta+2\pi/3) - \sin^2( \theta+4\pi/3)))d\theta\notag \\
& = \frac 1{2\pi}\cdot 2 a(x-1,t)  r(x-1,t) 2\sqrt{3}
+ r(x-1,t)^2 \cdot 0\notag \\
& = \frac{2\sqrt{3}}\pi a(x-1,t)  r(x-1,t).\notag 
\end{align}
Additionally, an analogous calculation to the previous one yields
\begin{align*}
\E &\left(\left(v_-(x+1,t+1)^2  - v_-(x,t+1)^2\right) \bone_{v_-(x,t+1) > v_-(x+1,t+1)}\mid x_t=x+1,\kappa_t=-1,
\calF_{x+1,t}\right)\\
&= -\frac{2\sqrt{3}}\pi a(x+1,t)  r(x+1,t).\notag 
\end{align*}
Hence we see that \eqref{a30.2} and \eqref{oc24.4} together imply that
\begin{align}\label{oc24.6}
\calE (x,t) &= \frac {\sqrt{3}} {\pi}
\E(a(x-1,t) r(x-1,t) - a(x+1,t)r(x+1,t)).
\end{align}

Recall that
the density of a normal random variable with mean $\alpha$ and variance $\beta^2$ is
\begin{align*}
f_{\alpha,\beta}(u)=
\frac 1 {\sqrt{2\pi} \beta} \exp\left ( - \frac {(u-\alpha)^2}{2 \beta^2}\right),
\end{align*}
and the joint density of three independent normal random variables with parameters $\bal=(\alpha_1,\alpha_2,\alpha_3)$ and $\bbet=(\beta_1,\beta_2,\beta_3)$ is
\begin{align*}
f_{\bal,\bbet}(u,y,z)=
\frac 1 {(2\pi)^{3/2} \beta_1\beta_2\beta_3} 
\exp\left ( - \frac {(u-\alpha_1)^2}{2 \beta_1^2}
- \frac {(y-\alpha_2)^2}{2 \beta_2^2}
- \frac {(z-\alpha_3)^2}{2 \beta_3^2}\right).
\end{align*}
It follows from \eqref{d5.1} that
\begin{align}\label{j1.1}
\E r(x,t) =
\int_{\R^3}
\frac 2 3 
\sqrt{u^2+y^2+z^2 - uy - yz-uz}f_{\bal,\bbet}(u,y,z) \,du \,dy \,dz,
\end{align}
where
\begin{align}\label{oc24.7}
\bal &= (\mu(x-1,t) , \mu(x,t) , \mu(x+1,t)),\\
\bbet &= (\sigma(x-1,t) , \sigma(x,t) , \sigma(x+1,t)).\label{oc24.8}
\end{align}

Let
\begin{align}
\bal_- &= (\mu(x-2,t) , \mu(x-1,t) , \mu(x,t)),\label{oc24.9}\\
\bbet_- &= (\sigma(x-2,t) , \sigma(x-1,t) , \sigma(x,t)),\label{oc24.10}\\
\bal_+ &= (\mu(x,t) , \mu(x+1,t) , \mu(x+2,t)),\label{oc24.11}\\
\bbet_+ &= (\sigma(x,t) , \sigma(x+1,t) , \sigma(x+2,t)).\label{oc24.12}
\end{align}
According to \eqref{oc24.5},
\begin{align}\notag
\wh\mu(x,t) &= \frac {\sqrt{3}} {2\pi}
\E( r(x-1,t) - r(x+1,t))\\
&= \frac {\sqrt{3}} {2\pi}
\int_{\R^3}
\frac 2 3 
\sqrt{u^2+y^2+z^2 - uy - yz-uz}f_{\bal_-,\bbet_-}(u,y,z) \,du\, dy\, dz\label{oc24.13} \\
&\quad -\frac {\sqrt{3}}{2\pi}
\int_{\R^3}
\frac 2 3 
\sqrt{u^2+y^2+z^2 - uy - yz-uz}f_{\bal_+,\bbet_+}(u,y,z)\, du\, dy\, dz.\notag
\end{align}

According to \eqref{oc24.6},
\begin{align}\label{m1.1}
\calE (x,t) &= \frac {\sqrt{3}} {\pi}
\E(a(x-1,t-1) r(x-1,t-1) - a(x+1,t-1)r(x+1,t-1))\\
&= \frac {\sqrt{3}} {\pi}
\int_{\R^3} \frac 1 3 (u+y+z)
\frac 2 3 
\sqrt{u^2+y^2+z^2 - uy - yz-uz}f_{\bal_-,\bbet_-}(u,y,z) \,du\, dy\, dz\notag \\
&\quad -\frac {\sqrt{3}}{\pi}
\int_{\R^3} \frac 1 3 (u+y+z)
\frac 2 3 
\sqrt{u^2+y^2+z^2 - uy - yz-uz}f_{\bal_+,\bbet_+}(u,y,z) \,du \,dy\, dz.\notag
\end{align}

In view of \eqref{oc24.13} and \eqref{m1.1}, we need to compute (or at least estimate) integrals of the form
\begin{align}\label{m12.1}
I_1(\bal,\bbet)& :=
\int_{\R^3}
\sqrt{\hat u^2+\hat y^2+\hat z^2 - \hat u\hat y - \hat y\hat z-\hat u\hat z}f_{\bal,\bbet}(\hat u,\hat y,\hat z) d\hat u d\hat y d\hat z,\\
I_2(\bal,\bbet)& :=
\int_{\R^3} (\hat u+\hat y+\hat z)
\sqrt{\hat u^2+\hat y^2+\hat z^2 - \hat u\hat y - \hat y\hat z-\hat u\hat z}f_{\bal,\bbet}(\hat u,\hat y,\hat z) d\hat u d\hat y d\hat z.\label{m12.2}
\end{align}

Given $\alpha_1,\alpha_2,\alpha_3$ and $\eps>0$, let $\alpha =(\alpha_1+\alpha_2+\alpha_3)/3$. It is elementary to check that there exist unique $\delta_1$ and $\gamma_1$ such that
\begin{align}\label{j1.6n}
\bal=(\alpha_1,\alpha_2,\alpha_3) = (\alpha-\eps \delta_1+\eps^2 \gamma_1, \alpha-2\eps^2 \gamma_1 , \alpha + \eps \delta_1+\eps^2 \gamma_1).
\end{align}
The following formula is analogous,
\begin{align}
\bbet=(\beta_1,\beta_2,\beta_3) = (\beta-\eps \delta_2 + \eps^2\gamma_2, \beta -2\eps^2\gamma_2, \beta +   \eps \delta_2+\eps^2 \gamma_2),\label{j1.7n_jgh}
\end{align}
with $\beta$ denoting the average of  $\bbet$. We will assume that $\beta_k>0$ for $k=1,2,3$.

According to Lemma \ref{j31.3},
\begin{align}\label{j23.1}
I_1(\bal,\bbet)& =\frac{\sqrt{3\pi}\beta}{2} \left[1+\frac{\eps^2}{2\beta^2}\left(\delta_1^2+\frac{\delta_2^2}{2} \right) \right] + O(\eps^3),\\
\label{j23.2}
I_2(\bal,\bbet)& = 3\alpha \frac{\sqrt{{3\pi\beta^2}}}{2} \left[1+\frac{\eps^2}{2\beta^2}\left(\delta_1^2+\frac{\delta_2^2}{2} \right) \right] +	2\eps^2\sqrt{3\pi}\delta_1\delta_2 +O(\eps^3).
\end{align}

Recall \eqref{oc24.9}-\eqref{oc24.12} and assume that
\begin{align}\label{m17.2}
\bal_- &= (\mu(x-2,t) , \mu(x-1,t) , \mu(x,t))\\
&= \left(\mu(x,t)-2\eps\delta_1 , \mu(x,t)-\eps\delta_1-3\eps^2\gamma_1 , \mu(x,t)\right)
, \notag\\
\bbet_- &= (\sigma(x-2,t) , \sigma(x-1,t) , \sigma(x,t))\label{m17.3}\\
&=
\left(\sigma(x,t)-2\eps\delta_2  , \sigma(x,t)-\eps\delta_2-3\eps^2\gamma_2 , \sigma(x,t)\right),\notag\\
\bal_+ &= (\mu(x,t) , \mu(x+1,t) , \mu(x+2,t))\label{m17.4}\\
&= \left(\mu(x,t), \mu(x,t)+\eps\delta_3 -3\eps^2\gamma_3, \mu(x,t)+2\eps\delta_3 \right),\notag\\
\bbet_+ &= (\sigma(x,t) , \sigma(x+1,t) , \sigma(x+2,t))\label{m17.5}\\
&=
\left(\sigma(x,t),\sigma(x,t)+\eps\delta_4 -3\eps^2\gamma_4 , \sigma(x,t)+2\eps\delta_4\right).\notag
\end{align}

Then \eqref{j23.1} implies that
\begin{align*}
     I_1&(\bal_-,\bbet_-)\\
     & =\frac{(\sigma(x,t) -\eps\delta_2 -\eps^2 \gamma_2)\sqrt{ {3\pi}}}{2} +
     \frac{\sqrt{ {3\pi}}}{2}
     \frac{\eps^2}{2(\sigma(x,t) -\eps\delta_2 -\eps^2 \gamma_2)}\left(\delta_1^2+\frac{\delta_2^2}{2} \right)  + O(\eps^3),\\ 
     I_1&(\bal_+,\bbet_+)\\
     & =\frac{(\sigma(x,t) +\eps\delta_4 -\eps^2 \gamma_4)\sqrt{ {3\pi}}}{2} +
     \frac{\sqrt{ {3\pi}}}{2}
     \frac{\eps^2}{2(\sigma(x,t) +\eps\delta_4 -\eps^2 \gamma_4)}\left(\delta_3^2+\frac{\delta_4^2}{2} \right)  + O(\eps^3).
\end{align*}

Recall \eqref{oc24.13} and \eqref{m12.1} to see that 
\begin{align}\notag
\wh\mu(x,t) &= \frac {1} {\sqrt{3}\pi}
\int_{\R^3}
\sqrt{u^2+y^2+z^2 - uy - yz-uz}f_{\bal_-,\bbet_-}(u,y,z) du dy dz \\
&\quad -\frac {1} {\sqrt{3}\pi}
\int_{\R^3}
\sqrt{u^2+y^2+z^2 - uy - yz-uz}f_{\bal_+,\bbet_+}(u,y,z) du dy dz\notag\\
&= \frac {1} {\sqrt{3}\pi} (I_1(\bal_-,\bbet_-) - I_1(\bal_+,\bbet_+)) \notag\\
 & =\frac{\sigma(x,t) -\eps\delta_2 -\eps^2 \gamma_2}{2\sqrt{\pi}} +
     \frac{1}{2\sqrt{\pi}}
     \frac{\eps^2}{2(\sigma(x,t) -\eps\delta_2 -\eps^2 \gamma_2)}\left(\delta_1^2+\frac{\delta_2^2}{2} \right)  + O(\eps^3) \notag\\ 
     & -\frac{\sigma(x,t) +\eps\delta_4 -\eps^2 \gamma_4}{2\sqrt{\pi}} -
     \frac{1}{2\sqrt{\pi}}
     \frac{\eps^2}{2(\sigma(x,t) +\eps\delta_4 -\eps^2 \gamma_4)}\left(\delta_3^2+\frac{\delta_4^2}{2} \right)  + O(\eps^3) \notag\\
     &= -\frac{1}{2\sqrt{\pi}} \eps (\delta_2 +\delta_4)
     + \eps^2 \frac{1}{2\sqrt{\pi}}
     \left(-\gamma_2 + \gamma_4  \right)  + O(\eps^3)
     .\label{j19.2}
\end{align}
The terms involving $\delta_1^2$ and $\delta_3^2$ canceled each other in the above calculation.
More precisely, they combined into an $O(\eps^3)$ quantity because, essentially,
$\delta_1$ and $\delta_3$ are derivatives at the nearby locations; see \eqref{j23.10}-\eqref{j23.13}
below for more details. A similar remark applies to the cancellation of terms 
involving $\delta_2^2$ and $\delta_4^2$.

In the next part of the proof, it will be convenient to use the notation $\eps = 1/(n-1)$.

We relate $\delta_k$'s and $\gamma_k$'s to derivatives of $\wt \mu$ and $\wt \sigma$ as follows.
\begin{align}\label{j23.10}
2 \eps \delta_1 &= \mu(x,t) - (\mu(x,t) - 2 \eps \delta_1)
= \mu(x,t) - \mu(x-2,t) \\
&= \frac 2 {n-1} \wt \mu_{x,n}(x,t)
- \frac{1}{2} \cdot \frac 4 {(n-1)^2} \wt \mu_{xx,n}(x,t)
+ O(n^{-3}).\notag
\end{align}
Similarly,
\begin{align}\label{j23.11}
2 \eps \delta_3 &=  (\mu(x,t) + 2 \eps \delta_3)-\mu(x,t)
= \mu(x+2,t) - \mu(x,t) \\
&= \frac 2 {n-1} \wt \mu_{x,n}(x,t)
+\frac{1}{2} \cdot  \frac 4 {(n-1)^2} \wt \mu_{xx,n}(x,t)
+ O(n^{-3}),\notag\\
3\eps^2 \gamma_1 &=
\frac 1 {2(n-1)^2} \wt \mu_{xx,n}(x,t)
+ O(n^{-3}),\label{j23.12}\\
3\eps^2 \gamma_3 &=
\frac 1 {2(n-1)^2} \wt \mu_{xx,n}(x,t)
+ O(n^{-3}).\label{j23.13}
\end{align}
Analogous formulas hold for $\delta_k$ and $\gamma_k$ for $k=2,4$.
This and \eqref{j19.2} imply that
\begin{align}\notag
\wh\mu(x,t) &=  -\frac{1}{2\sqrt{\pi}} \eps (\delta_2 +\delta_4)
     + \eps^2 \frac{1}{2\sqrt{\pi}}
     \left(-\gamma_2 + \gamma_4  \right)  + O(\eps^3)\\
& = -\frac{1}{\sqrt{\pi}(n-1)}
\wt \sigma_{x,n}(x,t)+ O(n^{-3})
    .\label{j23.5}
\end{align}

Next we use \eqref{j23.2} to obtain 
\begin{align*}
     I_2(\bal_-,\bbet_-)
     & =\frac{(\mu(x,t) -\eps\delta_1 -\eps^2 \gamma_1)
     (\sigma(x,t) -\eps\delta_2 -\eps^2 \gamma_2)3\sqrt{ {3\pi}}}{2}\\
     &\quad +
     \frac{3\sqrt{ {3\pi}}}{2}
     \frac{\eps^2 (\mu(x,t) -\eps\delta_1 -\eps^2 \gamma_1)}{2(\sigma(x,t) -\eps\delta_2 -\eps^2 \gamma_2)}\left(\delta_1^2+\frac{\delta_2^2}{2} \right) +	2\eps^2\sqrt{3\pi}\delta_1\delta_2  + O(\eps^3),\\ 
     I_2(\bal_+,\bbet_+)
     & =\frac{(\mu(x,t) +\eps\delta_3 -\eps^2 \gamma_3)
     (\sigma(x,t) +\eps\delta_4 -\eps^2 \gamma_4)3\sqrt{ {3\pi}}}{2} \\
     &\quad +
     \frac{3\sqrt{ {3\pi}}}{2}
     \frac{\eps^2 (\mu(x,t) +\eps\delta_3 -\eps^2 \gamma_3)}{2(\sigma(x,t) -\eps\delta_4 -\eps^2 \gamma_4)}\left(\delta_3^2+\frac{\delta_4^2}{2} \right) +	2\eps^2\sqrt{3\pi}\delta_3\delta_4 + O(\eps^3).
\end{align*}

Recall  \eqref{m1.1} and \eqref{m12.2} to see that,
\begin{align*}\notag
\calE (x,t) 
&= \frac {\sqrt{3}} {\pi}
\int_{\R^3} \frac 1 3 (u+y+z)
\frac 2 3 
\sqrt{u^2+y^2+z^2 - uy - yz-uz}f_{\bal_-,\bbet_-}(u,y,z) du dy dz \\
&\quad -\frac {\sqrt{3}}{\pi}
\int_{\R^3} \frac 1 3 (u+y+z)
\frac 2 3 
\sqrt{u^2+y^2+z^2 - uy - yz-uz}f_{\bal_+,\bbet_+}(u,y,z) du dy dz\notag\\
&= \frac{ 2 \sqrt{3}}{3\pi} \cdot
\frac 1 3  I_2(\bal_-,\bbet_-) -\frac{ 2 \sqrt{3}}{3\pi} \cdot
 \frac 1 3  I_2(\bal_+,\bbet_+) \\
     & = \frac{ 2 \sqrt{3}}{3\pi} \cdot
\frac 1 3 \cdot \frac{3\sqrt{ {3\pi}}}{2}
     (\mu(x,t) -\eps\delta_1 -\eps^2 \gamma_1)
     (\sigma(x,t) -\eps\delta_2 -\eps^2 \gamma_2)\\
     &\quad + \frac{ 2 \sqrt{3}}{3\pi} \cdot
\frac 1 3 \cdot 
     \frac{3\sqrt{ {3\pi}}}{2}
     \frac{\eps^2 (\mu(x,t) -\eps\delta_1 -\eps^2 \gamma_1)}{2(\sigma(x,t) -\eps\delta_2 -\eps^2 \gamma_2)}\left(\delta_1^2+\frac{\delta_2^2}{2} \right) + \frac{ 2 \sqrt{3}}{3\pi} \cdot
\frac 1 3 \cdot 	\eps^2\sqrt{3\pi}\delta_1\delta_2 \\ 
     & \quad - \frac{ 2 \sqrt{3}}{3\pi} \cdot
\frac 1 3 \cdot \frac{3\sqrt{ {3\pi}}}{2} (\mu(x,t) +\eps\delta_3 -\eps^2 \gamma_3)
     (\sigma(x,t) +\eps\delta_4 -\eps^2 \gamma_4) \\
     &\quad -
      \frac{ 2 \sqrt{3}}{3\pi} \cdot
\frac 1 3 \cdot  \frac{3\sqrt{ {3\pi}}}{2}
     \frac{\eps^2 (\mu(x,t) +\eps\delta_3 -\eps^2 \gamma_3)}{2(\sigma(x,t) -\eps\delta_4 -\eps^2 \gamma_4)}\left(\delta_3^2+\frac{\delta_4^2}{2} \right) -
    \frac{ 2 \sqrt{3}}{3\pi} \cdot
\frac 1 3 \cdot 	\eps^2\sqrt{3\pi}\delta_3\delta_4\\
&\qquad + O(\eps^3 )\\
     & = \frac{ 1}{\sqrt{ \pi}}
     (\mu(x,t) -\eps\delta_1 -\eps^2 \gamma_1)
     (\sigma(x,t) -\eps\delta_2 -\eps^2 \gamma_2)\\
     &\quad + \frac{ 1}{\sqrt{ \pi}}
     \frac{\eps^2 (\mu(x,t) -\eps\delta_1 -\eps^2 \gamma_1)}{2(\sigma(x,t) -\eps\delta_2 -\eps^2 \gamma_2)}\left(\delta_1^2+\frac{\delta_2^2}{2} \right) \\ 
     & \quad - \frac{ 1}{\sqrt{ \pi}} (\mu(x,t) +\eps\delta_3 -\eps^2 \gamma_3)
     (\sigma(x,t) +\eps\delta_4 -\eps^2 \gamma_4) \\
     &\quad -
      \frac{ 1}{\sqrt{ \pi}}
     \frac{\eps^2 (\mu(x,t) +\eps\delta_3 -\eps^2 \gamma_3)}{2(\sigma(x,t) -\eps\delta_4 -\eps^2 \gamma_4)}\left(\delta_3^2+\frac{\delta_4^2}{2} \right) 
     +\frac{2}{3\sqrt{\pi}}\eps^2 ( \delta_1 \delta_2 - \delta_3\delta_4) + O(\eps^3 )\\
& = \frac{ 1}{\sqrt{ \pi}}
\eps(-(\delta_1+\delta_3) \sigma(x,t) -(\delta_2+\delta_4) \mu(x,t))\\
&\quad + \frac{ 1}{\sqrt{ \pi}}
\eps^2 ( - \gamma_1 \sigma(x,t) - \gamma_2 \mu(x,t) + \gamma_3 \sigma(x,t) + \gamma_4 \mu(x,t)) \\
&\quad+\frac{5}{3\sqrt{\pi}}\eps^2 ( \delta_1 \delta_2 - \delta_3\delta_4) + O(\eps^3 ).
\end{align*}
It follows from \eqref{j23.10}-\eqref{j23.13} that
\begin{align*}
 \frac{ 1}{\sqrt{ \pi}}
\eps^2 ( - \gamma_1 \sigma(x,t) - \gamma_2 \mu(x,t) + \gamma_3 \sigma(x,t) + \gamma_4 \mu(x,t)) +\frac{5}{3\sqrt{\pi}}\eps^2 ( \delta_1 \delta_2 - \delta_3\delta_4)= O(\eps^3 ),
\end{align*}
so
\begin{align*}\notag
\calE (x,t) 
&= \frac{ 1}{\sqrt{ \pi}}
\eps(-(\delta_1+\delta_3) \sigma(x,t) -(\delta_2+\delta_4) \mu(x,t)) + O(\eps^3 )\\
& =-\frac{ 2}{\sqrt{ \pi}(n-1)}\left( \sigma(x,t)
 \wt \mu_{x,n}(x,t)
- \mu(x,t)
\wt \sigma_{x,n}(x,t)
\right) + O(n^{-3})
.
\end{align*}
By combining this estimate with \eqref{j23.4} and similarly combining 
\eqref{j23.5} with \eqref{j23.3}, we obtain \eqref{m12.10}-\eqref{m12.11}.
\end{proof}

\subsubsection{Estimates for $I_1$ and $I_2$}

Recall integrals defined in \eqref{m12.1}-\eqref{m12.2}:
\begin{align}\label{m12.1a}
I_1(\bal,\bbet)& =
\int_{\R^3}
\sqrt{\hat u^2+\hat y^2+\hat z^2 - \hat u\hat y - \hat y\hat z-\hat u\hat z}f_{\bal,\bbet}(\hat u,\hat y,\hat z) d\hat u d\hat y d\hat z,\\
I_2(\bal,\bbet)& =
\int_{\R^3} (\hat u+\hat y+\hat z)
\sqrt{\hat u^2+\hat y^2+\hat z^2 - \hat u\hat y - \hat y\hat z-\hat u\hat z}f_{\bal,\bbet}(\hat u,\hat y,\hat z) d\hat u d\hat y d\hat z.\label{m12.2a}
\end{align}
We will provide estimates for the two integrals needed in the proof of Theorem \ref{a28.2}.

Recall the following notation from \eqref{j1.6n}-\eqref{j1.7n_jgh}.
Given $\alpha_1,\alpha_2,\alpha_3$ and $\eps>0$, let $\alpha =(\alpha_1+\alpha_2+\alpha_3)/3$,
\begin{align}\label{j1.6na}
\bal&=(\alpha_1,\alpha_2,\alpha_3) = (\alpha-\eps \delta_1+\eps^2 \gamma_1, \alpha-2\eps^2 \gamma_1 , \alpha + \eps \delta_1+\eps^2 \gamma_1),\\
\bbet&=(\beta_1,\beta_2,\beta_3) = (\beta-\eps \delta_2 + \eps^2\gamma_2, \beta -2\eps^2\gamma_2, \beta +   \eps \delta_2+\eps^2 \gamma_2),\label{j1.7n_jgha}
\end{align}
with $\beta$ denoting the average of  $\bbet$. Assume that $\beta_k>0$ for $k=1,2,3$.

\begin{lemma}\label{j31.3}
We have 
\begin{align}\label{j23.1a}
I_1(\bal,\bbet) &=\frac{\sqrt{3\pi}\beta}{2} \left[1+\frac{\eps^2}{2\beta^2}\left(\delta_1^2+\frac{\delta_2^2}{2} \right) \right] + O(\eps^3),\\
\label{j23.2a}
I_2(\bal,\bbet) &= 3\alpha \frac{\sqrt{{3\pi\beta^2}}}{2} \left[1+\frac{\eps^2}{2\beta^2}\left(\delta_1^2+\frac{\delta_2^2}{2} \right) \right] +	\eps^2\sqrt{3\pi}\delta_1\delta_2 +O(\eps^3).
\end{align}
\end{lemma}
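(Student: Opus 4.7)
The plan is to establish both expansions via a Taylor expansion in $\eps$ around the symmetric base point $\bal_0 = \alpha\bbone$, $\bbet_0 = \beta\bbone$, where the integrals reduce to explicit Gaussian moment computations on the plane $H = \{\vec v \in \R^3 : v_1+v_2+v_3 = 0\}$. The key algebraic identity is
\[
 \hat u^2+\hat y^2+\hat z^2-\hat u\hat y-\hat y\hat z-\hat u\hat z = \tfrac{3}{2}\bigl[(\hat u-\bar v)^2+(\hat y-\bar v)^2+(\hat z-\bar v)^2\bigr]
\]
with $\bar v=(\hat u+\hat y+\hat z)/3$, so that $\sqrt{Q}$ equals $\sqrt{3/2}$ times the norm of the projection of $(\hat u,\hat y,\hat z)$ onto $H$. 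After the change of variables $\hat u_k = \alpha_k + \beta_k W_k$ with $W_k$ iid $N(0,1)$, the integrals take the form
\[
 I_1 = \sqrt{\tfrac32}\,\E\|\vec T\|, \qquad I_2 = 3\alpha\, I_1 + \sqrt{\tfrac32}\,\E\!\left[\Big(\textstyle\sum_k \beta_k W_k\Big)\|\vec T\|\right],
\]
where $\vec T(\eps)\in H$ has components $T_k = (\alpha_k-\alpha) + \bigl(\beta_k W_k - \tfrac13\sum_j \beta_j W_j\bigr)$.

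Next I would expand $\vec T(\eps) = \beta\vec Z + \eps\vec P + \eps^2 \vec R + O(\eps^3)$, where $\vec Z = \vec W - \bar W\bbone$ is a standard Gaussian on $H$ and $\vec P,\vec R$ are explicit affine combinations of $\vec W$ in the parameters $\delta_i,\gamma_i$. Applying $\sqrt{1+x} = 1 + x/2 - x^2/8 + O(x^3)$ yields the pointwise expansion
\[
 \|\vec T\| = \beta\|\vec Z\| + \eps\frac{\vec Z\cdot\vec P}{\|\vec Z\|} + \eps^2\!\left[\frac{\vec Z\cdot\vec R}{\|\vec Z\|} + \frac{\|\vec P\|^2}{2\beta\|\vec Z\|} - \frac{(\vec Z\cdot\vec P)^2}{2\beta\|\vec Z\|^3}\right] + O(\eps^3).
\]
The first-order term vanishes in expectation by a parity argument: the perturbation vectors $\vec A=(-\delta_1,0,\delta_1)$ and $\vec C=(-\delta_2,0,\delta_2)$ are antisymmetric under the involution $W_1\leftrightarrow W_3$, while $\sqrt{Q}$, $\sum_k\hat u_k$ and the law of $(W_1,W_2,W_3)$ are invariant, forcing each $O(\eps)$ integrand to flip sign and integrate to $0$. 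Combined with $\E\|\vec Z\| = \sqrt{\pi/2}$ (the 2D isotropic Rayleigh value), this yields $I_1(0) = \beta\sqrt{3\pi}/2$ and $I_2(0) = 3\alpha\beta\sqrt{3\pi}/2$.

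The main obstacle is the explicit evaluation of the three $O(\eps^2)$ expectations. The essential simplification is the decomposition $\vec W = \bar W\bbone + \vec Z$ with $\bar W\perp\vec Z$, which induces $\vec P = \vec A + \bar W\vec C + \vec Y_\perp$ and $\vec R = \vec B + \bar W\vec D + \vec U_\perp$ with $\vec Y_\perp,\vec U_\perp$ measurable with respect to $\vec Z$ alone. Introducing an orthonormal basis of $H$ so that $\vec Z = Z'e_2 + Z''e_3$ with $(Z',Z'')$ iid $N(0,1)$, and then polar coordinates $(r,\theta)$, the required scalar moments $\E\|\vec Z\| = \E[\|\vec Z\|^{-1}] = \sqrt{\pi/2}$, $\E[Z_k^2/\|\vec Z\|]$, $\E[Z_j^2 Z_k^2/\|\vec Z\|^3]$ all reduce to elementary $\Gamma$-function evaluations. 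Using $\|\vec A\|^2 = 2\delta_1^2$, $\|\vec C\|^2 = 2\delta_2^2$, $\vec A\cdot\vec C = 2\delta_1\delta_2$, and the identities $\|\vec Y_\perp\|^2 = \delta_2^2\|\vec Z\|^2/3$ and $\vec Z\cdot\vec Y_\perp = -2\delta_2 Z'Z''/\sqrt{3}$, the $\vec R$-contribution cancels between its two halves by the same parity applied to $\vec B,\vec D$ and the surviving terms combine to give exactly $\frac{\sqrt{\pi/2}}{2\beta}\bigl(\delta_1^2+\delta_2^2/2\bigr)$ as the $O(\eps^2)$ coefficient of $\E\|\vec T\|$, which is \eqref{j23.1a}. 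The $I_2$ computation is parallel, with the extra factor $\sum_k\beta_k(\eps)W_k = 3\beta\bar W + 3\eps\,\overline{CW} + 3\eps^2\,\overline{DW}$ coupling to $\|\vec T\|$ to produce an additional $\bar W$-driven cross term via $\E[\bar W\|\vec P\|^2/\|\vec Z\|]$, $\E[\bar W(\vec Z\cdot\vec P)^2/\|\vec Z\|^3]$ and $\E[\overline{CW}\,\vec Z\cdot\vec P/\|\vec Z\|]$; using $\mathrm{Var}(\bar W) = 1/3$ and independence of $\bar W$ from $\vec Z$, these combine to yield precisely the additional $\eps^2\sqrt{3\pi}\delta_1\delta_2$ term in \eqref{j23.2a}.
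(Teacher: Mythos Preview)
Your approach is correct in its main computations and takes a genuinely different route from the paper. The paper standardizes by shifting and scaling each coordinate by the \emph{common} mean $\alpha$ and variance $\beta$, so that all the $\eps$--dependence lands in the Gaussian exponent; it then Taylor-expands the density (Lemma~\ref{j31.4}) and evaluates the resulting polynomial moments $\int p(s,b,w)\sqrt{b^2+s^2}\,e^{-\omega^2/2}$ via a generating function and a Bessel-function identity from Gradshteyn--Ryzhik. You instead standardize each coordinate by its own $(\alpha_k,\beta_k)$, so the density is fixed standard normal and the $\eps$--dependence sits in the random vector $\vec T(\eps)$; you then Taylor-expand $\|\vec T(\eps)\|$ and compute the resulting expectations by elementary polar-coordinate Gaussian integrals. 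Your route is more probabilistic and avoids special functions entirely. One small correction: the vanishing of the $\vec R$--contribution is not a $W_1\!\leftrightarrow\!W_3$ parity (under that involution $\vec Z\cdot\vec R$ is \emph{invariant}, not anti-invariant); it follows instead from permutation symmetry of the $Z_k$ together with $\sum_k B_k=\sum_k D_k=0$.

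There is, however, a genuine technical gap you should address. Your pointwise expansion $\|\vec T\|=\beta\|\vec Z\|+\eps(\vec Z\cdot\vec P)/\|\vec Z\|+\cdots$ is singular at $\vec Z=0$, and the formal third-order remainder contains terms like $(\vec Z\cdot\vec P)^3/\|\vec Z\|^5$, which behave like $r^{-2}$ near $r=0$ and are \emph{not} integrable against the Rayleigh weight $r\,dr$. Hence one cannot simply invoke ``$\sqrt{1+x}=1+x/2-x^2/8+O(x^3)$'' uniformly and integrate term by term to justify the $O(\eps^3)$ error. The paper's density-expansion sidesteps this completely: $\sqrt{b^2+s^2}$ is fixed and has only polynomial growth, while the expanded density and its remainder carry an explicit Gaussian tail bound (Lemma~\ref{j31.4}), so the $O(\eps^3)$ control is immediate. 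To repair your argument you would need either to split off a shrinking ball $\{\|\vec Z\|\le\eps^\gamma\}$ and bound its contribution directly, or to observe that $\eps\mapsto\E\|\vec T(\eps)\|$ is smooth because it equals $\int_H\|\vec t\|\,\rho_\eps(\vec t)\,d\vec t$ with $\rho_\eps$ smooth in $\eps$---but the latter is essentially the paper's approach in disguise.
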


\begin{proof}
We make the change of variables $u=(\hat{u}-\alpha)/\beta,$ $y=(\hat{y}-\alpha)/\beta,$ $z= (\hat{z}-\alpha)/\beta.$
Then,
\begin{align*}
&(\hat{u}-\alpha_1)^2/\beta_1^2 = u^2 + 2u\eps \frac{\delta_1 +\delta_2 u}{\beta}+\frac{\eps^2}{\beta^2}\left[(3\delta_2^2-2\beta \gamma_2)u^2-(2\beta \gamma_1-4\delta_1\delta_2)u+\delta_1^2 \right]\\ 
&\quad+(1+u^2)O(\eps^3),\\
&(\hat{y}-\alpha_2)^2/\beta_2^2 = y^2+\frac{4 \eps^2}{\beta}\left[\gamma_2 y^2+\gamma_1y\right]+(1+y^2)O(\eps^4),\\
&(\hat{z}-\alpha_3)^2/\beta_3^2 = z^2 - 2z\eps \frac{\delta_1 +\delta_2 z}{\beta}+\frac{\eps^2}{\beta^2}\left[(3\delta_2^2-2\beta \gamma_2)z^2-(2\beta \gamma_1 - 4\delta_1\delta_2)z+\delta_1^2 \right]\\ 
&\quad +(1+z^2)O(\eps^3)  .
\end{align*}
We make another change of variables,
$$s=(z-u)/\sqrt{2},\quad\quad b = (y-(u+z)/2) \sqrt{2/3},\quad\quad {\rm and}\quad \quad w=(u+y+z)/\sqrt{3},$$
and we note that $s,b$ and $w$ form an orthonormal coordinate system if $u,y$ and $z$ do. Hence the Jacobian of the transformation $(u,y,z) \to (s,b,w)$ is equal to 1.

The following identities are easy to verify,
\begin{align}\label{j31.2}
b^2 + s^2
=  \frac 2 3 ( u^2+  y^2+  z^2 -   u  y -   y  z-  u  z)
= \frac 2 {3\beta^2} (\hat u^2+\hat y^2+\hat z^2 - \hat u\hat y - \hat y\hat z-\hat u\hat z) .
\end{align}

Tedious but straightforward calculations show that
\begin{align}\label{j31.1}
&\frac{(\hat{u}-\alpha_1)^2}{\beta_1^2}+\frac{(\hat{y}-\alpha_2)^2}{\beta_2^2}+\frac{(\hat{z}-\alpha_3)^2}{\beta_3^2}\\
&=s^2+b^2+w^2-2\eps\left(\frac{\delta_1\sqrt{2} s}{\beta}+\sqrt{\frac{8}{3}}\frac{\delta_2 s(w-b/\sqrt{2})}{\beta}\right) \notag\\
&\quad+\frac{\eps^2}{\beta^2}\left[2\delta_1^2 +\frac{8}{\sqrt{3}}\delta_1\delta_2\left(w-\frac{b}{\sqrt{2}}\right)\right.\notag\\\vspace{1.4cm}\notag\\
&\quad\left. +\,\sqrt{24} \gamma_1 \beta  b +(3\delta_2^2-2\beta\gamma_2)(s^2+b^2+w^2)+(2\beta\gamma_2-\delta_2^2)(\sqrt{2}b+w)^2 \right]\notag\\\vspace{1.4cm}\notag\\
&\quad+(1+s^2+b^2+w^2)O(\eps^3).\notag
\end{align}

For ease of exposition, we define
\begin{align*}
\omega ^2 &:= s^2+b^2+w^2,\\
p_0 &:= w-\frac{b}{\sqrt{2}},\\
p_1 &:= \frac{\delta_1\sqrt{2} }{\beta}+\sqrt{\frac{8}{3}}\frac{\delta_2 p_0}{\beta},
\end{align*}
and
$$p_2:=\frac{1}{\beta^2}\left[2\delta_1^2 +\frac{8}{\sqrt{3}}\delta_1\delta_2 p_0 +\sqrt{24} \gamma_1 \beta  b +(3\delta_2^2-2\beta\gamma_2)\omega^2+(2\beta\gamma_2-\delta_2^2)(\sqrt{2}b+w)^2 \right]. $$
Then \eqref{j31.1} takes the form
\begin{align*}
&\frac{(\hat{u}-\alpha_1)^2}{\beta_1^2}+\frac{(\hat{y}-\alpha_2)^2}{\beta_2^2}+\frac{(\hat{z}-\alpha_3)^2}{\beta_3^2}
= \omega^2+2\eps\, s\, p_1 - \eps^2 p_2-\eps^3 R, 
\end{align*}
where the remainder satisfies $\eps^3 R=\eps^3 R(\eps) = (1+s^2+b^2+w^2)O(\eps^3)$. 
We combine this formula with \eqref{m12.1}, \eqref{j31.2} and the formula for the normal density to obtain
\begin{align}\label{eqn:i1_unexpand}
I_1&(\bal,\bbet)= \sqrt{\frac{3}{2}}\frac{\beta^4}{(2\pi)^\frac{3}{2}\beta_1\beta_2\beta_3} \\
&\quad\quad \times\int_{\R^3} \sqrt{b^2+s^2}\, \exp\left(-\omega^2/2+\eps\, s\, p_1 - (\eps^2/2) p_2-(\eps^3/2 )R \right)\,{\rm d}s\,{\rm d}b\,{\rm d}w.\nonumber
\end{align}

We apply Lemma \ref{j31.4} to obtain
\begin{align}\notag
I_1(\bal,\bbet) &= \sqrt{\frac{3}{2}}\frac{\beta^4}{(2\pi)^\frac{3}{2}\beta_1\beta_2\beta_3}\\
&\quad \times \int_{\R^3}(1+\eps s p_1-(\eps^2/2) (p_2-p_1^2s^2)+\tilde{R}) \sqrt{b^2+s^2}\, \exp\left(-\omega^2/2\right)\,{\rm d}s\,{\rm d}b\,{\rm d}w,\label{a1.1}
\end{align}
where $\tilde{R}=O(\eps^3)$.

The term in the integrand in \eqref{a1.1} which is linear in $\eps$ is an odd function of $s$. By symmetry it integrates to zero. Similar reasoning applies to the mixed terms in the quadratic term. Specifically, we can eliminate terms in $p_2$ and $p_1^2$ that contain  $b,w$ or $bw$. Thus,
\begin{align}\label{j13.2}
    I_1(\bal,\bbet) &= \sqrt{\frac{3}{2}}\frac{\beta^4}{(2\pi)^\frac{3}{2}\beta_1\beta_2\beta_3} \\
&\quad \times \int_{\R^3}(1-\frac{\eps^2}{2}(\widehat{p}_2-\widehat{p}_1 s^2)+\tilde{R}) \sqrt{b^2+s^2}\, \exp\left(-\omega^2/2\right)\,{\rm d}s\,{\rm d}b\,{\rm d}w,\notag
\end{align}
where
$$\widehat{p}_1 := 2\frac{\delta_1^2}{\beta^2} + 8\delta_2\frac{w^2+b^2/2}{3 \beta^2}$$
and
$$\widehat{p}_2 := \frac{1}{\beta^2} \left[2\delta_1^2+(3\delta_2^2-2 \beta \gamma_2) \omega^2 -(\delta_2^2-2\beta \gamma_2)(2b^2+w^2) \right].$$

Now, we consider the integral
\begin{align}\label{j13.1}
    P(q_1,q_2,q_3):=\int_{\R^3}\sqrt{b^2+s^2} \exp\left(-\frac{s^2+b^2+w^2}{2}-q_1s-q_2b-q_3w\right)\,{\rm d}s\,{\rm d}b\,{\rm d}w.
\end{align}
Performing the integral over $w,$ we see that
$$\frac{P(q_1,q_2,q_3)}{\sqrt{2\pi}}=\exp(q_3^2/2)\int_{\R^2}\sqrt{b^2+s^2} \exp\left(-\frac{s^2+b^2}{2}-q_1s-q_2b\right)\,{\rm d}s\,{\rm d}b.$$
Going into polar coordinates yields
$$\frac{P(q_1,q_2,q_3)}{\sqrt{2\pi}}= \exp(q_3^2/2)\int_0^\infty \int_0^{2\pi} \rho^2 
\exp\left(-\rho^2/2 - \rho (q_1\cos(\theta)+q_2\sin(\theta))\right)
\,{\rm d} \theta\,{\rm d}\rho.$$
We note that for any fixed $(q_1,q_2)$ we can find $\Delta\theta$ so that $$q_1\cos(\theta) + q_2 \sin(\theta) = q_{1,2}\cos(\theta'),$$
where $\theta'=\theta+\Delta\theta$  and
$q_{1,2} :=\sqrt{q_1^2+q_2^2}$.
In these new variables,
$$\frac{P(q_1,q_2,q_3)}{\sqrt{2\pi}}= \exp(q_3^2/2)
\int_0^\infty \int_0^{2\pi}\rho^2
\exp\left(-\rho^2/2 - \rho\, q_{1,2} \cos(\theta')\right)
\,{\rm d} \theta'\,{\rm d}\rho.$$
Performing the inner integral, we find
$$\frac{P(q_1,q_2,q_3)}{(2\pi)^{3/2}}= \exp(q_3^2/2)
\int_0^\infty \rho^2e^{-\rho^2/2}\,I_0(\rho\, q_{1,2})\,{\rm d}\rho,$$
where $I_0(\rho\, q_{1,2})$ is the modified Bessel function of the first kind.

Next, we use identity 6.618.4 from Gradshteyn and Ryzhik \cite{GR}:
$$\int_0^\infty e^{-\lambda x^2}I_\nu(\eta x) \,{\rm d}x = \frac{\sqrt{\pi}}{2\sqrt{\lambda}} e^{\eta^2/(8\lambda)} I_{\frac{1}{2} \nu}(\eta^2/(8\lambda)).$$
After differentiating with respect to $\lambda$ and then setting $\lambda = 1/2$ and $\nu =0,$ we obtain
$$\int_0^\infty x^2 e^{- x^2/2}I_0(\eta x) \,{\rm d}x =\frac{{\sqrt{2\pi}}}{4}e^{\eta^2/4}\left[{\eta^2}\left(I_0(\eta^2/4)+I_1(\eta^2/4)\right)+2 I_0(\eta^2/4)\right]. $$

Putting this together with our previous expression for $P,$ we arrive at
\begin{align}\label{a3.1}
\frac{P(q_1,q_2,q_3)}{(2\pi)^2}= \frac 1 4
e^{q_3^2/2}e^{q_{1,2}^2/4}
\left[q_{1,2}^2\left(I_0\left(\frac{q_{1,2}^2}{4} \right)+I_1\left(\frac{q_{1,2}^2}{4} \right) \right)+2I_0 \left(\frac{q_{1,2}^2}{4} \right)  \right].
\end{align}

It follows from \eqref{j13.1} that 
\begin{align*}
-\partial_{q_1} P(q_1,q_2,q_3)
= \int_{\R^3}s\sqrt{b^2+s^2} \exp\left(-\frac{s^2+b^2+w^2}{2}-q_1s-q_2b-q_3w\right)\,{\rm d}s\,{\rm d}b\,{\rm d}w.
\end{align*}
Similar formulas hold if we differentiate with respect to $q_2$ or $q_3$, or we take higher derivatives with respect to these variables. When we evaluate the last expression at $(q_1,q_2,q_3)=(0,0,0)$, we obtain
\begin{align*}
 \int_{\R^3}s\sqrt{b^2+s^2} \exp\left(-\frac{s^2+b^2+w^2}{2}\right)\,{\rm d}s\,{\rm d}b\,{\rm d}w.
\end{align*}
These remarks imply that for any polynomial $p(s,b,w),$ 
\begin{align}\label{j13.3}
    \mathcal{I}_p &:=\int_{\mathbb{R}^3} p(s,b,w)\sqrt{b^2+s^2} 
\exp\left(-s^2/2-b^2/2-w^2/2\right)\,{\rm d}s\,{\rm d}b\,{\rm d}w\\
& = \left. p(-\partial_{q_1},-\partial_{q_2},-\partial_{q_3}) P(q_1,q_2,q_3) \right|_{(q_1,q_2,q_3)=(0,0,0)}.\notag
\end{align}
We now combine this formula and \eqref{a3.1}, and use standard recurrence formulas
for derivatives of the modified Bessel functions to obtain 
\begin{align}
\int_{\mathbb{R}^3} \sqrt{b^2+s^2} \exp\left(-s^2/2-b^2/2-w^2/2\right)\,{\rm d}s\,{\rm d}b\,{\rm d}w &= \frac{(2\pi)^2}{2}, \label{a3.2} \\
\int_{\mathbb{R}^3} w^2\sqrt{b^2+s^2} \exp\left(-s^2/2-b^2/2-w^2/2\right)\,{\rm d}s\,{\rm d}b\,{\rm d}w &= \frac{(2\pi)^2}{2}, \label{a3.3}  \\
 \int_{\mathbb{R}^3} s^2\sqrt{b^2+s^2} \exp\left(-s^2/2-b^2/2-w^2/2\right)\,{\rm d}s\,{\rm d}b\,{\rm d}w &= \frac{3(2\pi)^2}{4},  \label{a3.4} \\
 \int_{\mathbb{R}^3} b^2\sqrt{b^2+s^2} \exp\left(-s^2/2-b^2/2-w^2/2\right)\,{\rm d}s\,{\rm d}b\,{\rm d}w &= \frac{3(2\pi)^2}{4}, \label{a3.5} \\
 \int_{\mathbb{R}^3} s^2 b^2\sqrt{b^2+s^2} \exp\left(-s^2/2-b^2/2-w^2/2\right)\,{\rm d}s\,{\rm d}b\,{\rm d}w&= \frac{15(2\pi)^2}{16},  \label{a3.6} \\
 \int_{\mathbb{R}^3} s^2 w^2\sqrt{b^2+s^2} \exp\left(-s^2/2-b^2/2-w^2/2\right)\,{\rm d}s\,{\rm d}b\,{\rm d}w&=\frac{3(2\pi)^2}{4}.  \label{a3.7}
\end{align}
We now apply \eqref{a3.2}-\eqref{a3.7} to \eqref{j13.2} to obtain
$$I_1(\bal,\bbet) =\sqrt{\frac{3}{2}}\frac{\beta^4}{(2\pi)^\frac{3}{2}\beta_1\beta_2\beta_3} \frac{(2\pi)^2}{2} \left[1+\frac{\eps^2}{2\beta^2}\left(\delta_1^2-\frac{3}{2}\delta_2^2 \right) \right]+O(\eps^3). $$
Expanding the prefactor $\beta^4/(\beta_1 \beta_2\beta_3)$ into a series in $\eps$, we see that
\begin{align}\label{a3.10}
I_1(\bal,\bbet) =\frac{\sqrt{3\pi}\beta}{2} \left[1+\frac{\eps^2}{2\beta^2}\left(\delta_1^2+\frac{\delta_2^2}{2} \right) \right] + O(\eps^3),
\end{align}
which completes the proof of \eqref{j23.1a}.

Note that the integral in \eqref{m12.2a} has the extra factor of $\hat{u}+\hat{y}+\hat{z}$ compared to \eqref{m12.1a}. 
Since $\hat{u}+\hat{y}+\hat{z} = 3\alpha+\sqrt{3}\beta w$, we use \eqref{a1.1} to see that
\begin{align}
I_2(\bal,\bbet) &= \sqrt{\frac{3}{2}} \frac{\beta^4}{(2\pi)^{3/2}\beta_1\beta_2\beta_3}\int_{\mathbb{R}^3} (3\alpha+\sqrt{3}\beta w)\sqrt{b^2+s^2} \notag\\
&\quad\quad\times\left(1+\eps s p_1-(\eps^2/2)(p_2-p_1^2s^2)+\tilde{R}(\eps^3)\right)\, \exp\left(-\omega^2/2\right)\,{\rm d}s\,{\rm d}b\,{\rm d}w,\notag\\
& = 3\alpha I_1 +\frac{3}{\sqrt{2}} \frac{\beta^5}{(2\pi)^{3/2}\beta_1\beta_2\beta_3}\int_{\mathbb{R}^3} w\sqrt{b^2+s^2}\label{a3.11}\\
&\quad\quad\times\left(1+\eps s p_1-(\eps^2/2)(p_2-p_1^2s^2)+\tilde{R}(\eps^3)\right)\, \exp\left(-\omega^2/2\right)\,{\rm d}s\,{\rm d}b\,{\rm d}w.\notag
\end{align}
Let $\wt I_2$ denote the second term in the previous expression. Looking at the parity of each term in the integrand, we see that
\begin{align*}
\wt I_2&(\bal,\bbet) =  \\
&- \sqrt{\frac{3}{2}} \frac{4\eps^2\delta_1\delta_2 \beta^3}{(2\pi)^{3/2}\beta_1\beta_2\beta_3}\int_{\mathbb{R}^3} w^2\sqrt{b^2+s^2}(1-s^2) \exp\left(-\omega^2/2\right)\,{\rm d}s\,{\rm d}b\,{\rm d}w+O(\eps^3 ).
\end{align*}
Using \eqref{a3.3} and \eqref{a3.7}, it follows that
\begin{align*}
\wt I_2&= - \frac{(2\pi)^2}{2}\sqrt{\frac{3}{2}} \frac{4\eps^2\delta_1\delta_2 \beta^3}{(2\pi)^{3/2}\beta_1\beta_2\beta_3}\left(1-\frac{3}{2}\right)+O(\eps^3)
=\eps^2\sqrt{3\pi}\delta_1\delta_2 +O(\eps^3).
\end{align*}
Combining this with \eqref{a3.10} and \eqref{a3.11}, we obtain
\begin{align*}
I_2(\bal,\bbet) = 3\alpha \frac{\sqrt{{3\pi\beta^2}}}{2} \left[1+\frac{\eps^2}{2\beta^2}\left(\delta_1^2+\frac{\delta_2^2}{2} \right) \right] +	\eps^2\sqrt{3\pi}\delta_1\delta_2 +O(\eps^3).
\end{align*}
This completes the proof.
\end{proof}

\subsubsection{Series expansion} This subsection presents a technical result---a series expansion with an explicit error estimate needed in the proof of Lemma \ref{j31.3}.

\begin{lemma}\label{j31.4}
Define $\gamma$ by
\begin{align}\label{eqn:jgh_gam}
\gamma(\eps)=
\gamma(\eps,u,y,z) = \frac{(u+\tilde{\delta}_1\eps -\tilde{\gamma}_1 \eps^2)^2}{(1-\tilde{\delta}_2\eps +\tilde{\gamma}_2 \eps^2)^2}+\frac{(y+2\tilde{\gamma}_1 \eps^2)^2}{(1+2\tilde{\gamma}_2 \eps^2)^2}+\frac{(z-\tilde{\delta}_1\eps -\tilde{\gamma}_1 \eps^2)^2}{(1+\tilde{\delta}_2\eps +\tilde{\gamma}_2 \eps^2)^2}   
\end{align}
and $M$ by
$$M = \max\{1,|\tilde{\delta}_1|,2|\tilde{\gamma}_1|,|\tilde{\delta}_2|, 2|\tilde{\gamma}_2|\}.$$ 
Further suppose that $\eps$ is chosen so that $M \eps \le 1/4.$ Then
\begin{align}\label{a3.30}
    \exp\left(-\gamma(\eps)/2\right) = \left(1+sp_1 \eps-(\eps^2/2)({p}_2-{p_1} s^2)\right) \, \exp\left(-\rho^2/2\right) + \tilde{R},
\end{align}
where
$$|\tilde{R}| \le\frac{(500 \eps M)^3}{6}  (\rho^2+1)^3 e^{- \rho^2/9} ,$$
with $\rho^2 = u^2+y^2+z^2,$ and where
\begin{align*}
p_0 &:= w-\frac{b}{\sqrt{2}},\\
p_1 &:= \tilde{\delta}_1\sqrt{2} +\sqrt{\frac{8}{3}}\tilde{\delta}_2 p_0,
\end{align*}
$$p_2:=\left[2\tilde{\delta}_1^2 +\frac{8}{\sqrt{3}}\tilde{\delta}_1\tilde{\delta}_2 p_0 +\sqrt{24} \tilde{\gamma}_1   b +(3\tilde{\delta}_2^2-2\tilde{\gamma}_2)\rho^2+(2\tilde{\gamma}_2-\tilde{\delta}_2^2)(\sqrt{2}b+w)^2 \right] $$
and $$s=(z-u)/\sqrt{2},\quad\quad b = (y-(u+z)/2) \sqrt{2/3},\quad\quad {\rm and}\quad \quad w=(u+y+z)/\sqrt{3}.$$
\end{lemma}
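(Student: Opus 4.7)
The approach is a two-stage Taylor expansion with explicit remainder tracking. \textbf{Stage one} would be to expand $\gamma(\eps)$ itself to second order in $\eps$. Because the hypothesis $M\eps \le 1/4$ keeps each of the three denominators $(1 \pm \tilde\delta_2\eps + \tilde\gamma_2\eps^2)^2$ and $(1+2\tilde\gamma_2\eps^2)^2$ bounded in a range like $[(3/4)^2,(5/4)^2]$, the geometric expansion $(1+x)^{-2} = 1 - 2x + 3x^2 - \cdots$ is valid with a uniformly bounded tail. Multiplying out each fraction and collecting by powers of $\eps$ yields
\begin{align*}
\gamma(\eps) = \rho^2 + \eps G_1(u,y,z) + \eps^2 G_2(u,y,z) + \eps^3 R_\gamma(\eps;u,y,z),
\end{align*}
with $\rho^2 = u^2+y^2+z^2$ and an explicit bound $|R_\gamma| \le C M^3(1+\rho^2)$. \textbf{Stage two} would be to plug this into $\exp(-\gamma/2)$, factor out $e^{-\rho^2/2}$, and apply the Taylor expansion of $\exp$ at $0$ to second order: setting $\psi := -\eps G_1/2 - \eps^2 G_2/2 - \eps^3 R_\gamma/2$, Taylor's theorem gives $e^{\psi} = 1 + \psi + \psi^2/2 + \psi^3 e^{\theta\psi}/6$ for some $\theta \in [0,1]$, and retaining terms only through order $\eps^2$ produces $1 + (-G_1/2)\eps + (-G_2/2 + G_1^2/8)\eps^2$.

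To match the claimed coefficients one passes to the orthonormal $(s,b,w)$ frame defined in the statement, using $s^2+b^2+w^2=\rho^2$. A direct (but lengthy) computation, essentially the $\beta=1$ specialization of identity (j31.1) already carried out in the proof of Lemma \ref{j31.3}, shows that $G_1 = -2s p_1$ and $G_2 = p_2$. Substituting, one obtains $e^\psi = 1 + sp_1\eps + \bigl[-p_2/2 + (sp_1)^2/2\bigr]\eps^2 + (\text{remainder})$, and the bracketed coefficient equals $-(p_2 - p_1^2 s^2)/2$, giving the main term of the lemma (after noting that the statement's $p_1 s^2$ should be $p_1^2 s^2$ to match (a1.1)).

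The principal obstacle is the explicit, Gaussian-dominated remainder bound $(500\eps M)^3(\rho^2+1)^3 e^{-\rho^2/9}/6$. The cubic Taylor tail $\psi^3 e^{\theta\psi}/6$ is a polynomial of degree six in $(s,b,w)$ times $e^{|\psi|}$, and $\psi$ itself can grow like $\eps M(1+\rho^2)$. The hypothesis $M\eps \le 1/4$ forces $|\psi(\eps)|$ to be bounded by $\rho^2/c$ for a sufficiently large absolute $c$, so that after multiplying by the $e^{-\rho^2/2}$ prefactor the exponential slack $e^{-\rho^2/2 + |\psi|}$ still dominates the polynomial factor. Splitting $-\rho^2/2 + |\psi| \le -\rho^2/9 - \alpha\rho^2$ for some $\alpha > 0$ and absorbing $(1+\rho^2)^3 e^{-\alpha\rho^2} \le C_\alpha$ into the constant yields the stated form. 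Tracking the constants arising from the geometric series in stage one, from the factors $p_1, p_2$ appearing in $\psi$, and from the Gaussian absorption produces a single explicit constant; the lemma records this as $500^3/6$, which is comfortably large (and not claimed to be sharp) to accommodate all of the bookkeeping. This quantitative constant tracking, while mechanical, is the only real work in the proof — every other step is a routine expansion.
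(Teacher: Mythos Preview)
Your approach is correct and close in spirit to the paper's, but the paper organizes the Taylor step differently and this makes the remainder estimate cleaner. Rather than your two-stage expansion (first expand $\gamma(\eps)$, then expand $e^{\psi}$ with $\psi=-(\gamma-\rho^2)/2$), the paper applies Taylor's theorem \emph{directly} to the one-variable function $F(\eps)=e^{-\gamma(\eps)/2}$: it bounds $|\gamma'|,|\gamma''|,|\gamma'''|$ term by term (each summand is a product $f(\eps)g(\eps)$ with $f=(1+a_2\eps+b_2\eps^2)^{-2}$ and $g=(a_0-a_1\eps-b_1\eps^2)^2$), passes these through the chain rule to bound $|F'''(\eps)|$, and then the Lagrange remainder is simply $F'''(\xi)\eps^3/6$. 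The advantage is that $F'''(\xi)$ carries the factor $e^{-\gamma(\xi)/2}$ intact, so the Gaussian decay comes for free once one checks the elementary lower bound $\gamma(\xi)\ge\tfrac{2}{9}(\rho^2-3)$ uniformly in $\xi\in[0,\eps]$.

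Your remainder discussion needs tightening in two places. First, the claim that $M\eps\le 1/4$ forces $|\psi|\le\rho^2/c$ ``for a sufficiently large absolute $c$'' is not quite how it goes: what you actually need is the one-sided bound $\psi\le\tfrac{7}{18}\rho^2+O(1)$, and this is exactly equivalent to the lower bound $\gamma\ge\tfrac{2}{9}\rho^2-O(1)$ that the paper isolates as a separate step. You should state and use that lower bound explicitly rather than hand-wave about a large $c$. Second, your plan to ``absorb $(1+\rho^2)^3e^{-\alpha\rho^2}\le C_\alpha$ into the constant'' is both unnecessary and slightly off: the stated bound in the lemma \emph{retains} the factor $(\rho^2+1)^3$, and with the correct constants one gets $-\rho^2/2+\psi\le -\rho^2/9+O(1)$ with no spare $\alpha>0$. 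Finally, note that beyond the cubic tail $\psi^3e^{\theta\psi}/6$ you also owe bounds for the $-\eps^3R_\gamma/2$ contribution from the linear term and the $O(\eps^3)$ cross terms in $\psi^2/2$; these are easy (they carry the full $e^{-\rho^2/2}$) but should be mentioned. Your observation that the lemma's $p_1s^2$ should read $p_1^2s^2$ is correct and matches the paper's usage in (\ref{a1.1}).
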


\begin{proof}
The proof consists of a sequence of estimates for derivatives of elementary functions. 
Their proofs, elementary but tedious, are omitted. 

Let
\begin{align}\label{a3.20}
f(\eps) = (1+a_2\eps +b_2 \eps^2)^{-2}.
\end{align}
If $ \eps \leq 1/(4M_1)$ 
where
$$M_1 = \max\{1,|a_2|,|b_2|\}$$
then
$$ |f(\eps)| \le 4,$$
$$|f'(\eps)|\le  4(12) M_1,  $$
$$|f''(\eps)| \le 4 (12)^2 M_1^2, $$
$$|f'''(\eps)| \le 4 (12)^3 M_1^3.$$

Let
\begin{align}\label{a3.21}
g(\eps) = (a_0  - a_1\eps-b_1\eps^2)^2.
\end{align}
If $ \eps \leq 1/(4M_2)$ 
where
$$M_2 = \max\{1,|a_1|,|b_1|\}$$ 
then
$$|g(\eps)| \le 4(a_0 ^2+1),$$
$$|g'(\eps)|  \le  4 (12) M_2(a_0 ^2+1),$$
$$|g''(\eps)|\le 4 (12)^2 M_2^2(a_0 ^2+1),$$
$$|g'''(\eps)|\le 4(12)^3 M_2^3(a_0 ^2+1).$$

Let $f(\eps)$ and $g(\eps)$ be as in \eqref{a3.20}-\eqref{a3.21}. If $ \eps \leq 1/(4M_3)$ 
where
$$M_3 = \max\{1,|a_1|,|b_1|,|a_2|,|b_2|\}$$ 
then
$$|(fg)'(\eps)| \le 400 (a_0 ^2+1) M_3,$$ 
$$|(fg)''(\eps)| \le 1000\,(a_0 ^2+1)M_3^2,$$ 
$$|(fg)'''(\eps)| \le 250000 \, (a_0 ^2+1)M_3^3.$$

Let $\gamma$ be as in \eqref{eqn:jgh_gam}.
If $ \eps \leq 1/(4M)$ 
where
$$M = \max\{1,|\tilde{\delta}_1|,2|\tilde{\gamma}_1|,|\tilde{\delta}_2|, 2|\tilde{\gamma}_2|\}$$ 
then
$$|\gamma'(\eps)| \le 400 (u^2+y^2+z^2+3) M,$$ 
$$|\gamma''(\eps)| \le 1000\,(u^2+y^2+z^2+3)M^2,$$ 
$$|\gamma'''(\eps)| \le 250000 \, (u^2+y^2+z^2+3)M^3.$$

If $ \eps \leq 1/(4M)$ 
then
$$\left|\frac{{\rm d}^3}{{\rm d}\eps^3} e^{-\gamma(\eps,u,y,z)/2} \right|  
\le \frac{500^3}{2} M^3 (\rho^2+1)^3 e^{-\gamma(\eps,u,y,z)/2}.$$
It can be shown that
$$\gamma \ge \frac{2}{9}\left[{u^2}+y^2+z^2 -3 \right] $$
so we obtain a new estimate for the derivative, 
$$\left|\frac{{\rm d}^3}{{\rm d}\eps^3} e^{-\gamma(\eps,u,y,z)/2} \right| 
\le {500^3} M^3 (\rho^2+1)^3 e^{- (u^2+y^2+z^2)/9}.$$
 The proposition follows  from this estimate, noting that the right-hand side of \eqref{a3.30} is (apart from the remainder) the quadratic Taylor approximation to the left-hand side about $\eps=0$.
\end{proof}

\section{Acknowledgments}

We are grateful to Pablo Ferrari, Tomasz Komorowski, Adam Ostaszewski, John Sylvester and Balint T\'oth for very helpful advice.

We thank an anonymous referee for many suggestions for improvement.



\bibliographystyle{alpha}
\bibliography{pde}

 \end{document}